\def\Xobs{X_{\text{obs}}\xspace} 
\def\Dlb{D_{\text{lb}}\xspace} 
\def\F{\mathcal{F}}  
\def\O{\mathcal{O}}
  \def\B{\mathcal{B}}
 \def\W{\mathcal{W}}
\def\dR{\mathbb{R}}
\def\Reals{\mathbb{R}}
\renewcommand{\leq}{\leqslant}
\renewcommand{\geq}{\geqslant}
\newcommand{\pspace}{{\sc pspace}\xspace}
\newcommand{\opt}{OPT\xspace}
\newtheorem{defin}{Definition}
  \newenvironment{definition}{\begin{defin} \sl}{\end{defin}}
\newtheorem{theo}[defin]{Theorem}
  \newenvironment{thm}{\begin{theo} \sl}{\end{theo}}
\newtheorem{lemma}[defin]{Lemma}
\newtheorem{propo}[defin]{Proposition}
\newtheorem{coro}[defin]{Corollary}
  \newenvironment{cor}{\begin{coro} \sl}{\end{coro}}
\newtheorem{obse}[defin]{Observation}
\newtheorem{conj}[defin]{Conjecture}
\def\complexity{$\tilde{O}\left(m^4+m^2n^2\right)$\xspace}
\def\complexitybold{$\bm{\tilde{O}\left(m^4+m^2n^2\right)}$\xspace}
\newcommand{\Cpp}{C\raise.08ex\hbox{\tt ++}\xspace}
\begin{document}

\title{Motion Planning for Unlabeled Discs with Optimality
  Guarantees\thanks{\authorrefmark{1}K.\ Solovey, O.\ Zamir and D.\
    Halperin are with the Blavatnic School of Computer Science, Tel
    Aviv University, Israel; Email: {\sf
      \{kirilsol,danha\}$@$post.tau.ac.il, orzamir$@$mail.tau.ac.il};
    The work of K.S.\ and D.H.\ has been supported in part by the
    Israel Science Foundation (grant no.\ 1102/11), by the
    German-Israeli Foundation (grant no.\ 1150-82.6/2011), and by the
    Hermann Minkowski-Minerva Center for Geometry at Tel Aviv
    University.}  \thanks{\authorrefmark{2}J.\ Yu is with the Computer
    Science and Artificial Intelligence Lab at the Massachusetts
    Institute of Technology; Email: {\sf jingjin$@$csail.mit.edu}; His
    work has been supported in part by ONR projects N00014-12-1-1000
    and N00014-09-1-1051.}}

\author{\IEEEauthorblockN{Kiril
    Solovey\authorrefmark{1}, Jingjin
    Yu\authorrefmark{2}, Or Zamir\authorrefmark{1} and Dan
    Halperin\authorrefmark{1}}}

\maketitle

\begin{abstract}
  We study the problem of path planning for unlabeled
  (indistinguishable) unit-disc robots in a planar environment
  cluttered with polygonal obstacles.  We introduce an algorithm which
  minimizes the total path length, i.e., the sum of lengths of the
  individual paths.  Our algorithm is guaranteed to find a solution if
  one exists, or report that none exists otherwise. It runs in time
  \complexitybold, where $\bm{m}$ is the number of robots and $\bm{n}$
  is the total complexity of the workspace. Moreover, the total length
  of the returned solution is at most $\bm{\text{\opt}+4m}$, where \opt
  is the optimal   solution cost.  To the best of our knowledge this is
  the first algorithm for the problem that has such guarantees.  The
  algorithm has been implemented in an exact manner and we present
  experimental results that attest to its efficiency.
\end{abstract}

\IEEEpeerreviewmaketitle

\section{Introduction}

\begin{figure*}
\begin{center}
  \includegraphics[width=0.6\textwidth]{./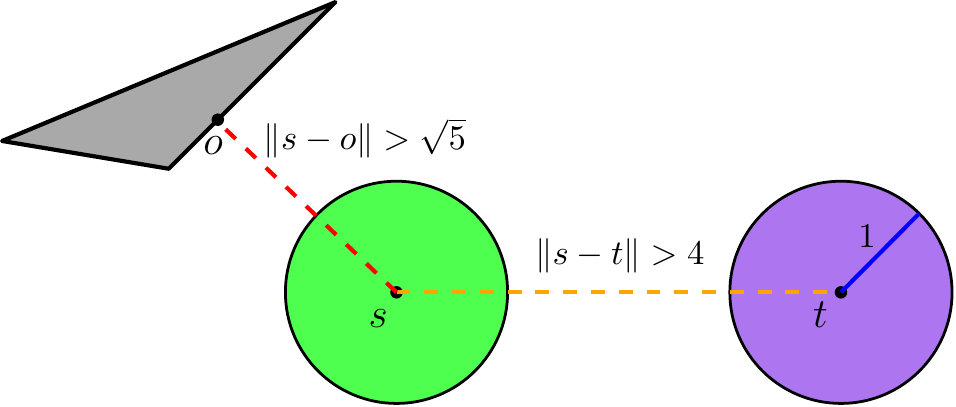}
\end{center}
\caption{Illustration of the separation conditions assumed in this
  work. The green and purple discs represent two unit-disc robots
  placed in $s\in S, t\in T$, respectively. The blue line represents
  the unit radius of the robot (for scale). The distance between $s$
  and $t$ is at least $4$ units (see dashed orange line). The gray
  rectangle represents an obstacle, and the point $o$ represents the
  closest obstacle point to $s$. Notice that the distance between $o$
  and $s$ is at least $\sqrt{5}$ units (see dashed red line).}
  \label{fig:separation}
\end{figure*}

We study the problem of path planning for unlabeled (i.e.,
indistinguishable) unit-disc robots operating in a planar environment
cluttered with polygonal obstacles. The problem consists of moving the
robots from a set of \emph{start} positions to another set of
\emph{goal} positions. Throughout the movement, each robot is required
to avoid collisions with obstacles and well as with its fellow
robots. Since the robots are unlabeled, we only demand that at the end
of the motion each position will be occupied by exactly one robot, but
do not insist on a specific assignment of robots to goals (in contrast
to the standard labeled setting of the problem).

Following the methodology of performing target assignments and path
planning concurrently \cite{calinescu2008reconfigurations,
  YuLav12WAFR, TurMicKum13ICRA} and adopting a graph-based vertex ordering
argument from \cite{YuLav12CDC}, we develop a complete combinatorial
algorithm to the unlabeled problem described above. Our algorithm
makes two simplifying assumptions regarding the input. First, it
assumes that each start or goal position has a (Euclidean) distance of
at least $\sqrt{5}$ to any obstacle in the environment (recall that
the robot discs have unit radius) . Additionally, it requires that the
distance between each start or goal position to any other such
position will be at least $4$ (see Figure~\ref{fig:separation}). Given
that the two separation conditions are fulfilled, out algorithm is
guaranteed to generate a solution if one exists, or report that none
exists otherwise. It has a running time\footnote{For simplicity of presentation, we omit $\log$
  factors when stating the complexity of the algorithm, and hence use
  the $\tilde{O}$ notation.} of \complexity, where $m$ is the number
of robots, and $n$ is the description complexity of the workspace
environment, i.e., the number of edges of the polygons.
Furthermore, the total length of the returned solution, i.e., the sum
of lengths of the individual paths, is at most $\text{\opt}+4m$, where
\opt is the optimal solution cost.

This paper should be juxtaposed with another work by the
authors~\cite{SolHal15} in which we show that a slightly different
setting of the unlabeled problem is computationally intractable. In
particular, we show that the unlabeled problem of unit-square robots
translating amid polygonal obstacles is \pspace-hard. Our proof relies
on a construction of gadgets in which start and goal positions are
very close to one another or to the obstacles, i.e., no separation is
assumed. This can be viewed as a justification to the assumptions used
in the current paper which allow an efficient solution of the problem.

We make several novel contributions. To the best of our knowledge, we
are the first to describe a complete algorithm that fully addresses
the problem of planning minimum total-distance paths for unlabeled
discs. We mention that Turpin et al.~\cite{TurMicKum13ICRA} describe a
complete algorithm for unlabeled discs; however their algorithm
minimizes the maximal path length of an individual robot. Furthermore,
our algorithm makes more natural assumptions on the input problem than
the ones made by Turpin et al. We also mention the work by Adler et
al.~\cite{abhs-unlabeled14} in which a complete algorithm is
described, but it does not guarantee optimality. While the latter work
makes the same assumption as we make on a separation of $4$ between
starts and goals, it does not assume separation from
obstacles. However, it requires that the workspace environment
will consist of a simple polygon. On the practical side, we provide an
exact and efficient implementation which does not rely on
non-deterministic procedures, unlike sampling-based algorithms. The
implementation will be made publicly available upon the publication of
this paper.

Our work is motivated by theoretical curiosity as well as potential
real world applications. From a theoretical standpoint, we have seen
great renewed interests in developing planning algorithms for
multi-robot systems in continuous domains (see, e.g.,
\cite{ TurMicKum13ICRA,TurMicKum12, KarGerSta12, klb-ftpp13,
  TurMohMicKum13}) and discrete settings (see, e.g.,
\cite{YuLav12WAFR, YuLav13ICRA-A}). Whereas significant headway has been
made in solving multi-robot motion-planning problems, many
challenges persist; the problem studied in this paper---path planning
for unlabeled disc robots in a general environment with guarantees on
total distance optimality---remains unresolved (until now). On the
application side, in the past few years, we have witnessed the rapid
development and adaptation of autonomous multi-robot and multi-vehicle
systems in a wide variety of application domains. The most prominent
example is arguably the success of Kiva Systems, now part of Amazon,
which developed a warehousing system employing hundreds of mobile
robots to facilitate order assembly tasks \cite{WurDanMou08}. More
recently, Amazon, DHL, and Google have demonstrated working prototypes
of aerial vehicles capable of automated package delivery. Since the
vehicles are intended to operate in an autonomous, swarm-like setting,
we can foresee in the near future the emerging demand of efficient
path planning algorithms designed for such systems. We note that when
the warehouse robots or the delivery aerial vehicles do not carry
loads, they are effectively unlabeled robots. In such scenarios,
planning collision-free, total-distance near-optimal paths translates into
allowing the vehicles, as a whole, to travel with minimum energy
consumption.

The rest of the paper is organized as follows. In
Section~\ref{sec:related} we review related work. In
Section~\ref{sec:preliminaries} we provide an overview of our
algorithm and necessary background material. In Section~\ref{sec:theory} we
establish several basic properties of the problem and describe the
algorithm in Section~\ref{sec:algorithm}. We report on experimental
results in Section~\ref{sec:experiments} and conclude with a
discussion in Section~\ref{sec:discussion}.

\section{Related work}\label{sec:related}
The problem of multi-robot motion planning is notoriously challenging
as it often involves many degrees of freedom, and consequently a vast
search space~\cite{hss-cmpmio, sy-snp84}. In general, each additional
robot introduces several more degrees of freedom to the
problem. Nevertheless, there is a rich body of work dedicated to this
problem. The earliest research efforts can be traced back to the 1980s
\cite{schwartz1983piano}.

Approaches for solving the problem can be typically subdivided into
categories. Decoupled techniques (see, e.g., \cite{ErdLoz86,
  LavHut98b, PenAke02, BerOve05, GhrOkaLav05, BerSnoLinMan09}) reduce
the size of the search space by partitioning the problem into several
subproblems, which are solved separately, and then the different
components are combined. In contrast to that, centralized
approaches (see, e.g., \cite{OdoLoz89, SveOve98,avbsv-mpfmr, KloHut06,
  shh12, ssh-fne13, WagCho15}) usually work in the combined
high-dimensional configuration space, and thus tend to be slower
than decoupled techniques. However, centralized algorithms often come
with stronger theoretical guarantees, such as completeness. Besides
these, the multi-robot motion-planning problem has also been attacked
using methods based on network flow \cite{KarGerSta12} and mixed integer
programming \cite{GriAke05}, among others.

Multi-robot motion planning can also be considered as a discrete
problem on a graph~\cite{kms-cpmg}. In this case the robots are
pebbles placed on the vertices of the graph and are bound to move from
one set of vertices to another along edges. Many aspects of the
discrete case are well understood. In particular, for the labeled
setting of the problem there exist efficient feasibility-test
algorithms~\cite{ampp-ltafpm, gh-mcpm, Yu13}, as well as complete
planners (\cite{klb-ftpp13, k-cpmg, LunBek11}). For the unlabeled
case, there even exist complete and efficient planners that generate
the optimal solution~\cite{YuLav12CDC, YuLav13ICRA-A, KatYuLav13}
under different metrics. While there exists a fundamental difference
between the discrete and the continuous setting of the multi-robot
problem, the continuous case being exceedingly more difficult, several
recent techniques in the continuous domain~\cite{TurMicKum13ICRA,
  abhs-unlabeled14, sh-kcolor14} have employed concepts that were
initially introduced in the discrete domain. 

As mentioned above, the works by Adler et al.~\cite{abhs-unlabeled14}
and Turpin et al.~\cite{TurMicKum13ICRA} solve very similar settings
of the unlabeled problem for disc robots to the one treated in this
paper, only with different assumptions and goal functions, and thus it
is important to elaborate on these two techniques.  Adler et
al.~\cite{abhs-unlabeled14} show that the unlabeled problem in the
continuous domain can be transformed into a discrete
\emph{pebble-motion on graph} problem. Their construction guarantees
that in case a solution to the former exists, it can be generated by
solving a discrete pebble problem and adapted to the continuous
domain. In particular, a motion of a pebble along an edge is
transformed into a motion of a robot along a local path in the free
space. Turpin et al.~\cite{TurMicKum13ICRA} find an assignment between
starts and goals which minimize the longest path length traveled by
any of the robots. Given such an assignment a shortest path between
every start position to its assigned goal is generated. However, such
paths still may result in collisions between the robots. The authors
show that collisions can be elegantly avoided by prioritizing the
paths. Our current work follows to some extent a similar approach,
although in our case some of the robots must slightly stray from the
precomputed paths in order to guarantee completeness---a thing which
makes our technique much more involved. This follows from the fact
that we make milder assumptions on the input. Another difference is
that the goal function of our algorithm is to minimize the total path
length, which requires very different machinery than the one used by
Turpin et al.

\section{Preliminaries and algorithm overview}\label{sec:preliminaries}
Our problem consists of moving $m$ indistinguishable unit-disc robots
in a workspace~$\W\subset \Reals^2$ cluttered with polygonal
obstacles, whose overall number of edges is $n$.  We define
$\O := \Reals^2\setminus \W$ to be the complement of the workspace,
and we call $\O$ the \emph{obstacle space}.  For given
$r \in \Reals_+, x\in \Reals^2$, we define $\B_r(x)$ to be the open
disc of radius $r$, centered at $x$. For given
$r\in \Reals_+,X\subset \dR^2$ we also define
$\B_r(X):=\bigcup_{x\in X}\B_r(x)$.

We consider the unit-disc robots to be open sets. Thus a robot avoids
collision with the obstacle space if and only if its center is at
distance at least~1 from $\O$. More formally, the collection of all
collision free configurations, termed the \emph{free space}, can be
expressed as
$\F := \left\{x \in \Reals^2 : \B_1(x) \cap \O = \emptyset \right\}$.
We also require the robots to avoid collisions with each other. Thus
for every given two configurations $x,x'\in \F$ two distinct robots
can be placed in $x$ and $x'$ only if $\|x-x'\|\geq 2$. Throughout
this paper the notation $\|\cdot\|$ will indicate the $L_2$ norm.

In addition to the workspace $\W$ we are given sets
$S = \{s_1, s_2, ..., s_m\}$ and $T = \{t_1, t_2, ..., t_m\}$ such
that $S, T \subset \F$. These are respectively the sets of
\emph{start} and \emph{goal} configurations of our $m$
indistinguishable disc robots.  The problem is now to plan a
collision-free motion for $m$ unit-disc robots such that each of them
starts at a configuration in $S$ and ends at a configuration in $T$.
Since the robots are indistinguishable or \emph{unlabeled}, it does
not matter which robot ends up at which goal configuration, as longs
as all the goal configurations are occupied at the end of the
motion. Formally, we wish to find $m$ paths $\pi_i:[0,1]\rightarrow \F$,
for $1\leq i\leq m$, such that $\pi_i(0)=s_i$ and
$\bigcup_{i=1}^m\pi_i(1)=T$. Furthermore, we are interested in finding
a set of such paths which minimizes the expression
$\sum_{i=1}^m|\pi_i|$, where $|\pi_i|$ represents the length of
$\pi_i$ in the $L_2$ norm.

\subsection{Simplifying assumptions}\label{sec:simply}
By making the following simplifying assumptions (see
Figure~\ref{fig:separation}) we are able to show that our algorithm is
complete and near-optimal. The first assumption that we make is
identical to the one used by Adler et al.~\cite{abhs-unlabeled14}. It
requires that every pair of start or goal positions will be separated
by a distance of at least $4$:
\begin{align}\label{equation:separation}
    \forall v, v' \in S\cup T, \quad  \| v - v' \| \geq 4.
\end{align}
\noindent
The motivation for the above assumption is the ability to prove the
existence of a \emph{standalone goal}---a goal position that does not
block other paths, assuming that the paths minimize the total length
of the solution.

We also need the following assumption in order to guarantee that the
robots will be able switch targets, in case that a given assignment of
robots to goals induces collision between the robots:
\begin{align}\label{equation:obstacle-clearance}
  \forall v \in S\cup T \text{\,\,\,and\,\,\,} \forall x\in \O,\quad \|v - x\|
  \geq \sqrt{5}.
\end{align}

\subsection{Overview of algorithm}
Here we provide an overview of our technique. Recall that our problem
consists of $S,T$, which specify the set of start and goal positions
for a collection of $m$ unit-disc robots, and a workspace environment
$\W$. 

We describe the first iteration of our recursive algorithm.  For every
$s_i\in S,t_j\in T$ we find the shortest path
$\gamma_i^j:[0,1]\rightarrow \F$ from $s_i$ to $t_j$. Among these
$m^2$ paths we select a set of $m$ paths
$\Gamma=\{\gamma_1,\ldots, \gamma_m\}$, where
$\gamma_i:[0,1]\rightarrow\F$ for every $1\leq i\leq m$, such that
$\bigcup_{i=1}^m\{\gamma_i(0)\}=S, \bigcup_{i=1}^m\{\gamma_i(1)\} =
T$.
Furthermore, we require that $\Gamma$ will be the path set that
minimizes the total length of its paths under these conditions.  Note
that at this point we only require that the robots will not collide
with obstacles, and do not worry about collisions between the
robots. The generation of $\Gamma$ is described in detail in
Section~\ref{sec:algorithm}, Theorem~\ref{thm:complexity}.

In the next step we find a goal $t\in T$ which does not block paths in
$\Gamma$ that do not lead to $t$. We call such a $t$ a
\emph{standalone goal}. Next, we find a start $s\in S$ from which a
robot will be able to move to $t$ without colliding with other robots
situated in the rest of the start positions. We carefully select such
a start $s$ and generate the respective path to $t$ in order to
minimize the cost of the returned solution. We prepare the input to
the next iteration of the algorithm by assigning
$S:=S\setminus \{s\}, T:=T\setminus \{t\}$, and by treating the robot
placed in $t$ as an obstacle, i.e., $\W:=\W\setminus \B_1(t)$.

\section{Theoretical foundations}\label{sec:theory}
In this section we establish several basic properties of the
problem. Recall that our problem is defined for a workspace~$\W$,
whose free space for a single unit-disc robot is $\F$. Additionally,
we have two sets of start and goal positions
$S=\{s_1,\ldots,s_m\}, T=\{t_1,\ldots,t_m\}$, respectively.

The following lemma implies that if a robot moving from some start
position $s\in S$ along a given path collides with a region $\B_1(t)$,
for some other goal $t\in T$, then there exists another path
$\gamma \in \F$ which moves the robot from $s$ to $t$. In the
  context of our algorithm this lemma implies that if a path for a
  robot interferes with some other goal position, then the path can be
  modified such that it will move the robot to the interfering goal
  instead.

\begin{lemma}\label{lem:shortcuts}
  Let $v\in S\cup T$ and $x\in \F$ such that
  $\B_1(x)\cap \B_1(v)\neq \emptyset$. Then the straight-line path
  from $x$ to $v$ is contained in $\F$, i.e., $\overline{xv}\subset \F$.
\end{lemma}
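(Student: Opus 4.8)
The plan is to show that every point on the segment $\overline{xv}$ stays at distance at least $1$ from the obstacle space $\O$, since by definition $\F = \{y : \B_1(y)\cap\O=\emptyset\}$ is exactly the set of points at distance at least $1$ from $\O$. So I would reduce the claim to a statement about distances: for every $y\in\overline{xv}$, we must have $\mathrm{dist}(y,\O)\geq 1$.

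The key observation I would exploit is the separation assumption~\eqref{equation:obstacle-clearance}, which guarantees that the goal/start point $v$ has clearance at least $\sqrt5$ from the obstacles, i.e.\ $\mathrm{dist}(v,\O)\geq\sqrt5$. The point $x$, being in $\F$, has clearance at least $1$. The hypothesis $\B_1(x)\cap\B_1(v)\neq\emptyset$ means $\|x-v\|<2$. First I would bound the length of the segment: any point $y\in\overline{xv}$ satisfies $\|y-v\|\leq\|x-v\|<2$. Then for an arbitrary obstacle point $o\in\O$, the triangle inequality gives $\|y-o\|\geq\|v-o\|-\|v-y\| > \sqrt5 - 2$. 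Unfortunately $\sqrt5-2\approx 0.236$ is far short of the required $1$, so the naive bound is not enough — this is the main obstacle, and it tells me the distance estimate must be made anchored at the \emph{nearer} endpoint rather than uniformly at $v$.

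The fix I anticipate is to split the segment and use the stronger clearance only where it helps, while using $x$'s clearance near $x$. More carefully, parametrize $y = (1-\lambda)x + \lambda v$ for $\lambda\in[0,1]$ and bound $\mathrm{dist}(y,\O)$ from below using the fact that the distance-to-$\O$ function is concave along a segment only in special cases; instead I would use that it is $1$-Lipschitz, so $\mathrm{dist}(y,\O) \geq \mathrm{dist}(v,\O) - \|y-v\|$ and also $\mathrm{dist}(y,\O)\geq \mathrm{dist}(x,\O)-\|y-x\|$. The real leverage, though, should come from a sharper geometric fact: since $\O$ is the complement of the free space for a unit disc and $v$ has clearance $\sqrt5$, the closest obstacle point $o$ to any $y$ near $v$ is forced to lie outside a region that, combined with $\|x-v\|<2$, pins down the worst case. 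I would set up the worst-case configuration explicitly (obstacle point $o$, with $\|o-v\|\geq\sqrt5$, and $y$ ranging over $\overline{xv}$ with $\|x-v\|<2$ and $\|x-o\|\geq1$) and minimize $\min_{y}\|y-o\|$ over all admissible placements, verifying the minimum is at least $1$.

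Concretely, I would fix the obstacle point $o$ realizing $\mathrm{dist}(x,\O)$ or the relevant minimum, place coordinates with $v$ at the origin and $o$ along an axis, and reduce the whole question to a two-dimensional optimization: over points $x$ with $\|x\|<2$ and $\|x-o\|\geq1$ where $\|o\|\geq\sqrt5$, show $\min_{\lambda\in[0,1]}\|(1-\lambda)x+\lambda\cdot 0 - o\|\geq1$. The constraint $\|x-o\|\geq1$ together with $\|o\|\geq\sqrt5$ and $\|x\|<2$ should force the segment $\overline{xv}$ to keep its distance from $o$; I expect the binding case to be when $\|x-o\|=1$, $\|o\|=\sqrt5$, and $\|x\|$ is as large as possible, and the calculation should yield the tight bound $\geq1$ (this is presumably where the specific constant $\sqrt5$ was chosen). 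The one step I expect to require care is verifying that $o$ can be taken as a single point realizing both the relevant constraints simultaneously, rather than two different obstacle points — I would justify this by noting that $\mathrm{dist}(y,\O)=\min_{o\in\O}\|y-o\|$ and that it suffices to prove the bound for each individual $o\in\O$ separately, which decouples the problem and lets me treat one obstacle point at a time.
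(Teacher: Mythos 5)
Your plan is sound and takes a genuinely different route from the paper. The paper's proof is a short containment argument: taking $\overline{xv}$ horizontal, the region swept by the unit disc along the segment is covered by $\B_1(v)$, $\B_1(x)$, and the rectangle with corners $v\pm(0,1)$ and $x\pm(0,1)$; the farthest corners of that rectangle lie at distance $\sqrt{\|v-x\|^2+1}<\sqrt{4+1}=\sqrt{5}$ from $v$, so by assumption~\eqref{equation:obstacle-clearance} the rectangle meets no obstacle, while the two end discs are free because $v,x\in\F$. You instead bound $\mathrm{dist}(y,\O)$ pointwise via a per-obstacle-point optimization, and that optimization does close: with $v$ at the origin, $o=(0,h)$, $h\geq\sqrt{5}$, and $x=(a,b)$ with $r:=\|x\|<2$, if the foot of the perpendicular from $o$ were strictly interior to the segment (so $0<bh<r^2$) with perpendicular distance $|a|h/r<1$, then $a^2<r^2/h^2$ forces $b>r\sqrt{h^2-1}/h$, hence $bh>r\sqrt{h^2-1}\geq 2r>r^2$, a contradiction; otherwise the minimum over the segment is attained at an endpoint, where it is at least $\min(1,\sqrt{5})=1$. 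You correctly identified the extremal configuration ($\|x-v\|=2$, $\|o-v\|=\sqrt{5}$, $\|x-o\|=1$), which is exactly where the constant $\sqrt{5}$ comes from, and your decoupling over individual obstacle points is legitimate since $\mathrm{dist}(y,\O)=\inf_{o\in\O}\|y-o\|$. The paper's version buys brevity and avoids the case analysis on where the minimum sits; yours makes the tightness of $\sqrt{5}$ explicit and, being purely metric, carries over verbatim to the 3D setting mentioned in the discussion. The only gap is that you left the minimization itself unexecuted; as shown above it goes through, so the plan is complete in substance.
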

\begin{proof}
  Here we use the fact that for every $o\in \O$ it holds that
  $\|v-o\|\geq \sqrt{5}$, which is the second separation
  assumption. Without loss of generality, assume that the
  straight-line segment from $v$ to $x$ is parallel to the
  $x$-axis. Denote by $A$ and $B$ the bottom points of the unit discs
  around $v$ and $x$, respectively (see
  Figure~\ref{fig:shortcuts}). Similarly, denote by $C$ and $D$ the
  top points of these discs. By definition of $v,x$, we know that
  $\|v-x\|<2$. Thus, $\|v-B\|=\|v-D\|< \sqrt{5}$ (see dashed red
  segment). This implies that the rectangle defined by the points
  $A,B,C,D$ is entirely contained in $\W$ (see orange square). Thus,
  the straight-line path $\overline{xv}$ is fully contained in $\F$.
\end{proof}

\begin{figure}[b]
\begin{center}
  \includegraphics[width=0.3\textwidth]{./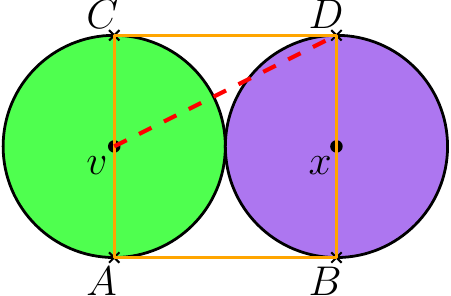}
\end{center}
  \caption{Illustration of the proof of Lemma~\ref{lem:shortcuts}.}
  \label{fig:shortcuts}
\end{figure}

\begin{definition}
  Let $\Gamma$ be a set of $m$ paths $\{\gamma_1,\ldots, \gamma_m\}$
  such that for every $1\leq i\leq m$, $\gamma_i:[0,1]\rightarrow \F$,
  $S=\bigcup_{i=1}^m\{\gamma_i(0)\},T=\bigcup_{i=1}^m\{\gamma_i(1)\}$. We
  call $\Gamma$ the \emph{optimal-assignment path set} for $S,T,\F$,
  if it minimizes the expression $|\Gamma|:=\sum_{i=1}^m|\gamma_i|$,
  over all such path sets.
\end{definition}

Note that $\Gamma$ is not necessarily a feasible solution to our
problem since at this stage we still ignore possible collisions
between robots.  Let $\Gamma=\{\gamma_1,\ldots,\gamma_m\}$ be an
optimal-assignment path set for $S,T,\F$. Without loss of generality,
assume that for every $1\leq i\leq m$,
$\gamma_i(0)=s_i,\gamma_i(1)=t_i$.

\begin{definition}
  Given an optimal-assignment path set $\Gamma$, we call $t_k\in T$ a
  \emph{standalone goal}, for some $1\leq k\leq m$, if for every
  $\gamma_i\in \Gamma, i\neq k$ it holds that
  $\B_1(t_k)\cap \B_1(\gamma_i)=\emptyset$.
\end{definition}

Standalone goals play a crucial role in our algorithm. We first show
that at least one such goal always exists.

\begin{thm}\label{thm:standalone}
  Let $\Gamma$ be an optimal-assignment path set. Then there exists a
  standalone goal.
\end{thm}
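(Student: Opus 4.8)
The plan is to argue by contradiction: assuming no standalone goal exists, I will construct a cyclic reassignment of robots to goals whose total length is strictly smaller than $|\Gamma|$, contradicting the optimality of $\Gamma$. Throughout, let $d(\cdot,\cdot)$ denote shortest-path length in $\F$, so that $|\gamma_i|=d(s_i,t_i)$ and $|\Gamma|=\sum_i d(s_i,t_i)$ is minimal over all assignments.

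Suppose that no $t_k$ is standalone. Then for every $k$ there is an index $g(k)\neq k$ such that $\gamma_{g(k)}$ blocks $t_k$, i.e.\ $\B_1(t_k)\cap\B_1(\gamma_{g(k)})\neq\emptyset$. This defines a map $g\colon\{1,\ldots,m\}\to\{1,\ldots,m\}$ with no fixed points. Since every vertex of the functional digraph of $g$ has out-degree one, iterating $g$ from any index must eventually revisit a vertex, so $g$ contains a directed cycle $k_1\to k_2\to\cdots\to k_\ell\to k_1$ with $g(k_j)=k_{j+1}$ (indices modulo $\ell$); in particular $\gamma_{k_{j+1}}$ blocks $t_{k_j}$ for every $j$. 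This pigeonhole step is the combinatorial backbone of the argument.

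Next I would extract the geometric consequence of each blocking relation. Since $\B_1(t_{k_j})\cap\B_1(\gamma_{k_{j+1}})\neq\emptyset$, there is a point $x_j$ lying on $\gamma_{k_{j+1}}$ with $\|x_j-t_{k_j}\|<2$, and by Lemma~\ref{lem:shortcuts} the segment $\overline{x_j t_{k_j}}$ is contained in $\F$. Hence concatenating the initial portion of $\gamma_{k_{j+1}}$, from $s_{k_{j+1}}$ up to $x_j$, with $\overline{x_j t_{k_j}}$ yields a valid path in $\F$, so $d(s_{k_{j+1}},t_{k_j})$ is at most the length of that initial portion plus $\|x_j-t_{k_j}\|$. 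The crucial estimate now comes from the separation assumption~\eqref{equation:separation}: because $\|t_{k_j}-t_{k_{j+1}}\|\geq 4$ and $\|x_j-t_{k_j}\|<2$, the triangle inequality gives $\|x_j-t_{k_{j+1}}\|>2>\|x_j-t_{k_j}\|$. Since the remaining portion of $\gamma_{k_{j+1}}$, from $x_j$ to $t_{k_{j+1}}$, is itself a path in $\F$, its length is at least $\|x_j-t_{k_{j+1}}\|$, which strictly exceeds $\|x_j-t_{k_j}\|$; thus rerouting robot $k_{j+1}$ to $t_{k_j}$ is strictly cheaper than traversing that remaining portion.

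Finally I would assemble the reassignment: every robot outside the cycle keeps its current goal, while robot $k_{j+1}$ is sent to $t_{k_j}$ for each $j$ in the cycle. Summing the per-edge estimate over $j$, the total length of the rerouted path set is bounded above by $|\Gamma|$ minus a sum of strictly positive terms, each equal to the length of the remaining portion of $\gamma_{k_{j+1}}$ minus $\|x_j-t_{k_j}\|$. The result is a path set realizing $S$ and $T$ with strictly smaller total length, contradicting the optimality of $\Gamma$; hence a standalone goal must exist. I expect the main obstacle to be the geometric estimate of the preceding paragraph---verifying via Lemma~\ref{lem:shortcuts} that each shortcut stays in $\F$, and confirming that the separation constant $4$ is exactly what forces every rerouted cycle path to be strictly shorter---whereas the cycle extraction is routine by comparison.
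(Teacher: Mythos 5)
Your proof is correct and follows essentially the same route as the paper's: a functional-digraph/pigeonhole argument to extract a blocking cycle, followed by a cyclic reassignment in which each path is truncated at an interference point and shortcut (via Lemma~\ref{lem:shortcuts}) to the blocked goal, with the separation constant $4$ forcing each rerouted path to be strictly shorter. The only cosmetic difference is that the paper cuts at the \emph{first} interference point, whereas your argument works for any interference point, which you correctly justify.
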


\begin{proof}
  Assume by contradiction that every goal $t_i$ interferes with some
  path $\gamma_j\in \Gamma, i\neq j$. This implies that there is a
  circular interference, i.e., there exist $\ell\leq m$ goals, which we
  denote, for simplicity and without loss of generality, as
  $t_{1},\ldots, t_{\ell}$ such that for every $1<i\leq \ell$,
  $\B_1(t_i)\cap \B_1(\gamma_{i-1})\neq \emptyset$, and
  $\B_1(t_1)\cap \B_1(\gamma_{\ell})\neq \emptyset$. More formally,
  let $\mathcal{I}$ be a directed graph vertices are $T$. For every
  $t_i$ which interferes with a path $\gamma_j$, where $j\neq i$ we
  draw an edge from $t_i$ to $t_j$. If there exists no standalone goal
  then $\mathcal{I}$ has a directed cycle of size greater than
  one. This is due to the fact that if $\mathcal{I}$ were
  \emph{directed acyclic} then it should have had a node whose
  \emph{out degree} is zero.

  We show that in this case, the paths
  $\gamma_1, \ldots, \gamma_{\ell}$ do not induce the minimal
  assignment for the starts and goals
  $\{s_1,\ldots, s_{\ell}\},\{t_1,\ldots,t_{\ell}\}$. This would imply
  that $\Gamma$ is not the optimal assignment path set. We claim that
  instead of assigning $s_i$ to $t_i$ we may assign $s_i$ to $t_{i+1}$
  for $1 \leq i <\ell$ and from $s_{\ell}$ to $t_1$ and get a
  collection of paths $\gamma'_1,\ldots, \gamma'_{\ell}$, such that
  $|\gamma'_i|<|\gamma_i|$ for every $1\leq i\leq \ell$. Denote by
  $x\in \gamma_i$ the first interference point with $t_{i+1}$ along
  $\gamma_i$. Additionally, denote by $\gamma_i^x$ the subpath of
  $\gamma_i$ that starts with $s_i$ and ends with $x$. Define $\gamma'_i$
  to be the concatenation of $\gamma_i^x$ and $xt_{i+1}$.

  We need to show first that $\gamma'_i\subset\F$ and that
  $|\gamma'_i|<|\gamma_i|$. The subpath $\gamma^x_i$ is obstacle-collision free,
  and so is $\overline{xt_{i+1}}$ according to Lemma~\ref{lem:shortcuts}. Thus,
  $\gamma'_i$ is free as well. Now, note that $\|x-t_{i+1}\|<2$ and
  $\|t_{i+1}-t_{i}\|\geq 4$. Thus, by the triangle inequality
  $\|x-t_{i}\|\geq 2$. This finishes our proof.
\end{proof}  

We introduce the notion of \emph{$0$-hop} and \emph{$1$-hop}
paths. Informally, a $0$-hop path is a path assigned to the standalone
goal which is not blocked by any start position.

\begin{figure}[b]
  \begin{center}
    \includegraphics[width=0.4\textwidth]{./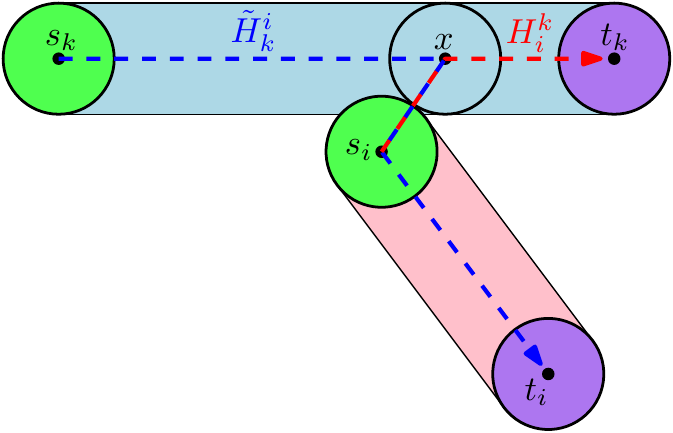}
  \end{center}
  \caption{Example of $1$-hop and switch paths. $t_k$ is
    a standalone goal. The $1$-hop path $H_i^k$ is drawn as a dashed
    red curve, while the switch path $\tilde{H}_k^i$ is drawn as a
    dashed blue curve.}
  \label{fig:paths}
\end{figure}

\begin{definition}
  Let $t_k$ be a standalone goal. The path $\gamma_k$ is called a
  $0$-hop path if for every $s_i\in S,i\neq k$ it follows that $\B_1(s_i)\cap \B_1(\gamma_k)=\emptyset$.
\end{definition}

\begin{definition}\label{def:1hop}
  Let $t_k$ be a standalone goal, and suppose that $\gamma_k$ is not
  $0$-hop. Let $x\in \gamma_k$ be the farthest point along $\gamma_k$
  for which there exists $s_i\in S$ such that
  $\B_1(s_i)\cap \B_1(x) \neq \emptyset$. The $1$-hop path, which is
  denoted by $H_i^k$, is a concatenation of the straight-line path
  $\overline{s_ix}$ and a subpath of $\gamma_k$ that starts at $x$
  and ends~in~$t_k$.
\end{definition}

Namelly, the $1$-hop path moves the robot situated in $s_i$, which
interferes with $\gamma_k$, to $t_k$ (see Figure~\ref{fig:paths}). The
following theorem shows that a robot situated in $s_i$ that blocks the
$0$-hop path can be moved to $t_k$ without inducing collisions with
other robots.

\begin{thm}\label{thm:1_hop}
  Let $t_k$ be a standalone goal. Suppose that $\gamma_k$ is not
  $0$-hop and let $H_i^k$ be the $1$-hop path from $s_i$ to
  $t_k$. Then it holds that for every $s_j\in S, j\neq i,k$,
  $\B_1(H_i^k)\cap \B_1(s_j)=\emptyset$.
\end{thm}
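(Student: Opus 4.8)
The plan is to split the $1$-hop path $H_i^k$ at its break point $x$ into its two constituent pieces---the straight-line segment $\overline{s_ix}$ and the terminal subpath $\sigma$ of $\gamma_k$ running from $x$ to $t_k$---and to argue separately that neither piece comes within distance $2$ of any start $s_j$ with $j\neq i,k$. Recalling that for points $p,q$ we have $\B_1(p)\cap\B_1(q)=\emptyset$ exactly when $\|p-q\|\geq 2$, the statement $\B_1(H_i^k)\cap\B_1(s_j)=\emptyset$ is equivalent to showing that every point of $H_i^k$ lies at distance at least $2$ from $s_j$; I would prove this piecewise.

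For the segment $\overline{s_ix}$, I would use that $x$ is an interference point of $s_i$, so $\|s_i-x\|<2$, and therefore every point $y\in\overline{s_ix}$ satisfies $\|y-s_i\|\leq\|s_i-x\|<2$. Combining this with the start-separation assumption~\eqref{equation:separation}, which gives $\|s_i-s_j\|\geq 4$, the triangle inequality yields $\|y-s_j\|\geq\|s_i-s_j\|-\|y-s_i\|>4-2=2$ for every such $y$. Hence the entire segment is disjoint from $\B_1(s_j)$.

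For the subpath $\sigma$ from $x$ to $t_k$, I would invoke the maximality built into Definition~\ref{def:1hop}: since $x$ is the \emph{farthest} point along $\gamma_k$ blocked by any start, no point of $\gamma_k$ strictly beyond $x$ is within distance $2$ of any start at all, in particular not of $s_j$. The only remaining point to treat is the shared endpoint $x$ itself, and this is the one place requiring a little care, since $x$ by construction is blocked by $s_i$. Here I would argue that $x$ cannot simultaneously be blocked by a second start: if both $\|s_i-x\|<2$ and $\|s_j-x\|<2$ held, the triangle inequality would give $\|s_i-s_j\|<4$, contradicting~\eqref{equation:separation}. Thus $s_j$ does not block $x$ either, and $\sigma$ is disjoint from $\B_1(s_j)$.

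Putting the two pieces together establishes $\B_1(H_i^k)\cap\B_1(s_j)=\emptyset$ for all $j\neq i,k$. The argument is essentially two applications of the triangle inequality against the constant $4$-separation of starts; I expect the only genuine subtlety to be the treatment of the junction point $x$, where one must rule out a second interfering start rather than rely on maximality, and to remember that $s_k$ must be excluded precisely because $\gamma_k$ emanates from $s_k$ and hence is blocked by it near its origin.
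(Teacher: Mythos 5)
Your proof is correct and follows essentially the same route as the paper's: split $H_i^k$ at the junction point $x$, use the $4$-separation of starts plus the triangle inequality to clear the segment $\overline{s_ix}$, and use the maximality of $x$ in Definition~\ref{def:1hop} (together with the observation that separation forbids a second start from blocking $x$) to clear the terminal subpath of $\gamma_k$. You merely spell out explicitly the triangle-inequality steps that the paper compresses into ``by separation of start positions'' and ``the separation condition.''
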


\begin{proof}
  By separation of start positions, and by definition of $x$ (see
  Definition~\ref{def:1hop}), there exists a single start position
  $s_i$ for which $\B_1(x)\cap \B_1(s_i)\neq \emptyset$. By
  Lemma~\ref{lem:shortcuts}, and the separation condition, the path
  $\overline{s_ix}$, does not induce collisions between a robot moving
  along it and other robots placed in $s_j$, for
  $1\leq j\leq m, j\neq i,k$. Finally, by definition of $x$, the rest
  of the path $H_i^k$, which is a subpath of $\gamma_k$, is free of
  collisions with a robot situated~in~$s_j$.
\end{proof}

After moving a robot from $s_i$ to $t_k$ we have to ensure that some
other robot can reach $t_i$. We define the following \emph{switch}
path (see Figure~\ref{fig:paths}).

\begin{definition}
  Let $t_k$ be a standalone goal and suppose that $\gamma_k$ is not
  $0$-hop. The \emph{switch} path $\tilde{H}_k^i$ is a concatenation
  of the following paths: (1)~the subpath of $\gamma_k$ that starts in
  $s_k$ and ends in $x$; (2)~$\overline{xs_i}$; (3)~$\gamma_i$.
\end{definition}

\begin{lemma}\label{lem:switch}
  Let $\tilde{H}_i^k$ be a switch path. Then
  $\B_1(\tilde{H}_i^k)\cap \B_1(t_k)=\emptyset$.
\end{lemma}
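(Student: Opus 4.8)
The plan is to decompose the switch path $\tilde{H}_i^k$ into its three constituent pieces and argue separately that each stays at distance at least $2$ from $t_k$, so that its unit-disc inflation avoids $\B_1(t_k)$. The easiest ingredient to secure first is the final piece, which is simply $\gamma_i$: since $t_k$ is a standalone goal and the interfering start $s_i$ satisfies $i\neq k$ (the definition of a non-$0$-hop path draws the blocking start from $S\setminus\{s_k\}$), the standalone property gives $\B_1(t_k)\cap\B_1(\gamma_i)=\emptyset$ directly.

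For the middle piece $\overline{xs_i}$ I would use a pure triangle-inequality estimate. Because $x$ is an interference point, $\|x-s_i\|<2$, so every point $z$ on the segment satisfies $\|z-s_i\|<2$; combined with the start--goal separation $\|s_i-t_k\|\geq 4$, this yields $\|z-t_k\|\geq\|s_i-t_k\|-\|z-s_i\|>2$, hence $\B_1(\overline{xs_i})\cap\B_1(t_k)=\emptyset$. The same computation, applied at $z=x$, records the fact $\|x-t_k\|>2$, which I will reuse below.

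The one piece that requires real work, and which I expect to be the main obstacle, is the head subpath of $\gamma_k$ running from $s_k$ to $x$; I must show that no point of it comes within distance $2$ of $t_k$. Here I would exploit that $\gamma_k$ is a shortest path in $\F$ between its endpoints (an optimal-assignment path set uses a shortest path for each pair, since otherwise the total length could be reduced). Suppose for contradiction that some point $y$ on this head subpath had $\|y-t_k\|<2$. Then $\B_1(y)\cap\B_1(t_k)\neq\emptyset$, so Lemma~\ref{lem:shortcuts} (with $v=t_k$) gives $\overline{yt_k}\subset\F$; replacing the tail of $\gamma_k$ from $y$ to $t_k$ by this segment would not increase length, and by the triangle inequality along the curve it cannot decrease length either, so optimality of $\gamma_k$ forces the subpath of $\gamma_k$ from $y$ to $t_k$ to coincide with $\overline{yt_k}$. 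Since $x$ lies on $\gamma_k$ between $y$ and $t_k$, it would then lie on $\overline{yt_k}$, giving $\|x-t_k\|\leq\|y-t_k\|<2$ and contradicting $\|x-t_k\|>2$.

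Putting the three pieces together establishes $\B_1(\tilde{H}_i^k)\cap\B_1(t_k)=\emptyset$. The delicate point throughout is to be careful that $x$ is taken as the \emph{farthest} interference point (so the head subpath ends before any approach to $t_k$) and that the shortcut argument invokes only Lemma~\ref{lem:shortcuts}, whose clearance hypothesis is exactly the $\sqrt{5}$ separation assumption applied to the goal $t_k$.
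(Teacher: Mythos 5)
Your proof is correct, and for the middle segment $\overline{xs_i}$ it coincides exactly with the paper's argument: $\|z-s_i\|\leq\|x-s_i\|<2$ together with $\|s_i-t_k\|\geq 4$ forces $\|z-t_k\|>2$. Where you genuinely add something is the head subpath of $\gamma_k$ from $s_k$ to $x$. The paper disposes of both the head subpath and the tail piece $\gamma_i$ with the single phrase ``due to the definition of $x$,'' but the definition of $x$ only concerns interference with \emph{start} positions and says nothing about proximity to $t_k$ (and $\gamma_k$ certainly does reach $t_k$ eventually, so some argument is needed that the portion \emph{before} $x$ stays clear). Your shortest-path argument supplies exactly the missing step: if a point $y$ on the head subpath had $\|y-t_k\|<2$, Lemma~\ref{lem:shortcuts} makes $\overline{yt_k}\subset\F$, optimality of $\gamma_k$ forces its tail from $y$ onward to be that segment, and then $x$ would lie on it with $\|x-t_k\|<2$, contradicting the bound $\|x-t_k\|>2$ you already extracted from the middle piece. (For the final piece your appeal to the standalone property of $t_k$ is the right and immediate justification.) So the decomposition and the key inequalities are the same as the paper's, but your writeup is the more complete of the two; the only mild caveat is that it leans on $\gamma_k$ being a shortest path between its endpoints, which holds here because the optimal-assignment path set is built from pairwise shortest paths, but is an ingredient the paper's own one-line justification never invokes.
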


\begin{proof}
  $\tilde{H}_i^k$ can potentially interfere with $t_k$ only along
  $\overline{xs_i}$, due to the definition of $x$. For any $y\in
  \overline{xs_i}$ it follows that $\|y-s_i\|<2$. If $\|y-t_k\|<2$
  then it follows that $\|s_i-t_k\|<4$, which is a contradiction.
\end{proof}

\begin{cor}\label{cor:cost_1_hop}
  Let $H_i^k,\tilde{H}_k^i$ be the $1$-hop and switch paths,
  respectively. Then
  $|H_i^k|+|\tilde{H}_k^i|=|\gamma_i|+|\gamma_k|+4$.
\end{cor}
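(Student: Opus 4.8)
The corollary claims a clean additive identity relating the lengths of the 1-hop and switch paths to the lengths of the original optimal-assignment paths $\gamma_i$ and $\gamma_k$, with an overhead of exactly $4$. Let me reason through the decomposition of each constructed path.

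Recall the point $x$ from Definition~\ref{def:1hop}: it lies on $\gamma_k$, and $\B_1(s_i)\cap\B_1(x)\neq\emptyset$, so by Lemma~\ref{lem:shortcuts} we have $\overline{s_ix}\subset\F$ and, crucially, $\|s_i-x\|<2$. The segment $\gamma_k$ splits naturally at $x$ into a first piece from $s_k$ to $x$, of length $a:=|\gamma_k^x|$, and a second piece from $x$ to $t_k$, of length $b:=|\gamma_k|-a$. By definition, the $1$-hop path $H_i^k$ is the concatenation of $\overline{s_ix}$ with the second piece, so $|H_i^k|=\|s_i-x\|+b$. The switch path $\tilde H_k^i$ is the concatenation of the first piece of $\gamma_k$, the segment $\overline{xs_i}$, and the whole of $\gamma_i$, so $|\tilde H_k^i|=a+\|x-s_i\|+|\gamma_i|$.

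The plan is to add these two expressions. Summing gives
\begin{equation}
|H_i^k|+|\tilde H_k^i|=\bigl(\|s_i-x\|+b\bigr)+\bigl(a+\|x-s_i\|+|\gamma_i|\bigr)=|\gamma_i|+(a+b)+2\|s_i-x\|,
\end{equation}
and since $a+b=|\gamma_k|$ this reduces to $|\gamma_i|+|\gamma_k|+2\|s_i-x\|$. The only remaining task is to pin down the value of $\|s_i-x\|$.

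Here is where I expect the subtlety to lie, and it is the one point that needs care rather than routine bookkeeping: the stated identity has the constant $4$, which forces $2\|s_i-x\|=4$, i.e.\ $\|s_i-x\|=2$. Yet Lemma~\ref{lem:shortcuts} only guarantees $\|s_i-x\|<2$, so the exact value $2$ cannot hold for an arbitrary interference point. The resolution must be that $x$ is not just any interference point but is chosen so that $\B_1(s_i)$ and $\B_1(x)$ are exactly tangent, i.e.\ $x$ is taken on the boundary where $\|s_i-x\|=2$; this is consistent with $x$ being the \emph{farthest} point along $\gamma_k$ that lies within distance $2$ of some start, the extremal choice placing it on the boundary of the interference region. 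Under that reading the identity follows immediately from the displayed computation. I would therefore first justify $\|s_i-x\|=2$ from the extremal definition of $x$ as the farthest point of interference (so that $x$ sits on the boundary circle $\partial\B_2(s_i)$), and then simply substitute into the sum above to obtain $|\gamma_i|+|\gamma_k|+4$. The main obstacle is thus not the algebra but correctly interpreting the definition of $x$ so that the boundary distance is exactly $2$; everything else is a one-line concatenation-length accounting.
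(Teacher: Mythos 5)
Your computation is correct and is precisely the argument the paper leaves implicit (the corollary is stated there without proof): splitting both constructed paths at $x$ gives $|H_i^k|+|\tilde H_k^i|=|\gamma_i|+|\gamma_k|+2\|s_i-x\|$, so everything hinges on $\|s_i-x\|=2$. Your resolution of that point is also the intended one --- since the interference condition $\B_1(s_i)\cap\B_1(x)\neq\emptyset$ is open and $\|t_k-s_i\|\geq 4$, the ``farthest'' interference point in Definition~\ref{def:1hop} must be read as the supremum, which by continuity lies on $\partial\B_2(s_i)$, yielding the constant $4$ exactly.
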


\section{Near-optimal algorithm for unlabeled
  planning}\label{sec:algorithm}
In this section we describe our algorithm for the unlabeled
multi-robot motion planning problem of unit-disc robots moving amid
polygonal obstacle and establish its completeness. Additionally, we
bound the cost of the returned solution. Finally, we analyze the
complexity of the algorithm.

\subsection{The algorithm}
We describe our recursive algorithm, which returns a set of $m$ paths
$\Pi$. Recall that the input consists of $S,T$ and a workspace $\W$,
which induces the free space $\F$. The algorithm first produces the
optimal-assignment path set $\Gamma$.  Let \opt be the optimal
solution cost and note that $|\Gamma|\leq \text{\opt}$ since $|\Gamma|$ is a
lower bound on the actual cost, as interactions between the robots
might increase the traversed distance.

Let $t_k$ be a standalone goal, which exists according to
Theorem~\ref{thm:standalone}.  Suppose that the $0$-hop path
$\gamma_k$ is not blocked by any other robot located in a start
position. Then, $\gamma_k$ is added to $\Pi$ and the algorithm is run
recursively on the input
$S':=S\setminus \{s_k\}, T':=T\setminus \{t_k\}$, with the workspace
$\W':=\W\setminus \B_1(t_k)$, which results in the free space
$\F':=\F\setminus \B_2(t_k)$.

Alternatively, in case that $\gamma_k$ is blocked, i.e., in
interference with some $s_i\in S, i\neq k$, the algorithm produces the
$1$-hop path $H_i^k$, as described in Theorem~\ref{thm:1_hop}, and
adds it to $\Pi$. Then, the algorithm is run recursively on the input
$S':=S\setminus \{s_i\}, T':=T\setminus \{t_k\}$, with the workspace
$\W':=\W\setminus \B_1(t_k)$, and the free space
$\F':=\F\setminus \B_2(t_k)$.

\subsection{Completeness and near-optimality}
We first show that the algorithm is guaranteed to find a solution, if
one exists, or report that none exists otherwise.

\begin{thm}\label{thm:complete}
  Given an input $S,T,\W$, which complies with
  assumptions~\ref{equation:separation}
  ,\ref{equation:obstacle-clearance} (Section~\ref{sec:simply}), and
  for which the number of start and goal positions for every connected
  component of $\F$ is the same, the algorithm is guaranteed to find a
  solution for the unlabeled multi-robot motion-planning problem.
\end{thm}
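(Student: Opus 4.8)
The plan is to prove this by induction on $m$, the number of robots, using the recursive structure of the algorithm. The base case $m=0$ (or $m=1$) is trivial: with no robots there is nothing to move, and with a single robot the standalone goal $t_k$ is automatically a $0$-hop path (no other start positions exist to block it), so $\gamma_k$ itself is a valid solution. For the inductive step, I would assume the claim holds for all instances with fewer than $m$ robots and show that the algorithm's first iteration correctly reduces an $m$-robot instance to an $(m-1)$-robot instance satisfying the same hypotheses.

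First I would verify that the reduced instance $S', T', \W'$ produced in either branch again satisfies the two separation assumptions~\eqref{equation:separation} and~\eqref{equation:obstacle-clearance} and the per-component balance condition. The separation conditions are inherited automatically, since $S' \subset S$, $T' \subset T$, and removing $\B_1(t_k)$ from $\W$ only enlarges $\O$; I must check that the newly created obstacle $\B_1(t_k)$ respects the $\sqrt{5}$-clearance from every remaining start and goal, which follows directly from assumption~\eqref{equation:separation} (any remaining $v$ has $\|v - t_k\| \geq 4$, so $v$ is at distance at least $3$ from $\B_1(t_k)$). The more delicate point is the balance condition: I must argue that removing $t_k$ and the chosen start ($s_k$ in the $0$-hop case, $s_i$ in the $1$-hop case) keeps the number of starts and goals equal in each connected component of $\F'$. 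In the $0$-hop branch the path $\gamma_k$ connects $s_k$ to $t_k$ within one component, so removing both preserves balance there; in the $1$-hop branch the robot travels from $s_i$ to $t_k$, again a single connected component, so the same accounting holds.

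Next I would establish that the path added to $\Pi$ in the current iteration does not collide with any robot that remains parked at a start position during this move, and that after the move the remaining robots can still be routed in $\F'$. For the $0$-hop case this is immediate from the definition. For the $1$-hop case this is exactly the content of Theorem~\ref{thm:1_hop}, which guarantees $\B_1(H_i^k) \cap \B_1(s_j) = \emptyset$ for all $j \neq i,k$. I would then note that once the robot reaches $t_k$, treating it as a new obstacle is justified precisely because $t_k$ is a \emph{standalone} goal (Theorem~\ref{thm:standalone} guarantees one exists): the subsequent motions computed recursively in $\F'$ will never need to pass through $\B_1(t_k)$, so freezing that robot costs no feasibility. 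The switch path $\tilde{H}_k^i$ and Lemma~\ref{lem:switch} ensure that in the $1$-hop case the robot originally at $s_k$ can reach $t_i$ without disturbing $t_k$, which is what allows the remaining assignment to be realized.

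\textbf{The main obstacle} I anticipate is the converse direction required by the word ``guaranteed'': the theorem as stated asserts the algorithm finds a solution whenever the hypotheses hold, so I must show the hypotheses (separation plus per-component balance) are genuinely \emph{sufficient} for feasibility, not merely that the recursion terminates. Concretely, the recursion always produces \emph{some} set of $m$ paths, but I must rule out the possibility that freezing a standalone goal as an obstacle disconnects a component in a way that strands a later robot. The key insight to deploy here is that the standalone property, combined with the obstacle-clearance assumption~\eqref{equation:obstacle-clearance}, ensures the newly frozen disc $\B_1(t_k)$ cannot sever a start from its reachable goals: because every start and goal sits at distance at least $\sqrt{5}$ from the original obstacles and at least $4$ from $t_k$, there is always enough clearance to route remaining robots around the frozen disc. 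I would formalize this by showing that connectivity between the surviving starts and goals in $\F'$ is preserved, appealing to the fact that the optimal-assignment path set for the reduced instance lives entirely in $\F'$ since none of its paths needed to enter $\B_1(t_k)$. Once connectivity and balance are maintained, the inductive hypothesis applies to the $(m-1)$-robot instance and completes the argument.
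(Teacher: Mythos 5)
Your proposal is correct and follows essentially the same route as the paper's proof: it relies on the same ingredients (existence of a standalone goal, the $0$-hop/$1$-hop dichotomy via Theorem~\ref{thm:1_hop}, Lemma~\ref{lem:switch} placing the switch path in $\F\setminus\B_2(t_k)$, and the observation that the $4$-separation of assumption~\eqref{equation:separation} yields the $\sqrt{5}$-clearance of assumption~\eqref{equation:obstacle-clearance} for the newly frozen disc). The only difference is presentational: you cast the argument as an explicit induction on $m$ and spell out the preservation of the per-component balance invariant, which the paper treats more tersely.
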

\begin{proof}
  Consider the first iteration of the algorithm. Let
  $\Gamma:=\{\gamma_1,\ldots, \gamma_m\}$ be the optimal-assignment
  path set and let $t_k$ be a standalone goal.

  Suppose that $\gamma_k$ is a $0$-hop path. Then, for every
  $j\neq k$, $\gamma_j\subset \F\setminus \B_2(t_k)$, since $t_k$ is a
  standalone goal.  Now suppose that $\gamma_k$ is not a $0$-hop
  path. By Theorem~\ref{thm:1_hop} the $1$-hop path $H_i^k$ from $s_i$
  to $t_k$ does not collide with any other start position. By
  Lemma~\ref{lem:switch} $\tilde{H}_k^i\subset \F\setminus
  \B_2(t_k)$.
  Additionally, notice that for every $j\neq i,k$,
  $\gamma_j\subset \F\setminus \B_2(t_k)$.

  Thus, in any of the two situations, the removal of the standalone
  goal does not separate between start and goal configurations that
  are in the same connected component of $\F$. Note that in the first
  level of the recursion the existence of a standalone goal $t_k$
  guarantees the existence of a path to $t_k$ which does not collide
  with the other robots. This is possible due to
  assumptions~\ref{equation:separation},
  \ref{equation:obstacle-clearance}. Thus, in order to ensure the
  success of the following recursions, we need to show that these
  assumptions are not violated. First, note that
  assumption~\ref{equation:separation} is always maintained, since we
  do not move existing start and goal positions. Secondly, when a
  robot situated in $t_k$ is treated as an obstacle added to the set
  $\O:=\O\cup \B_1(t_k)$ (or conversely removed from $\F$, as
  described above), the first assumption induces the second.
\end{proof}

We proceed to prove the near-optimality of our solution.

\begin{thm}\label{thm:optimal}
  Let $\Pi$ be the solution returned by our algorithm and let $\Gamma$ be
  the optimal-assignment path set for $S,T,\W$. Then $|\Pi|\leq
  \text{\opt}+4 m$, where \opt is the optimal solution cost.
\end{thm}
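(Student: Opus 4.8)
The plan is to prove the bound $|\Pi|\leq \text{\opt}+4m$ by induction on the number of robots $m$, tracking how much cost the algorithm adds on top of the optimal-assignment lower bound $|\Gamma|$ at each recursive step. The key observation is that $|\Gamma|\leq\text{\opt}$, established in the algorithm description, so it suffices to show that the total cost $|\Pi|$ exceeds $|\Gamma|$ by at most $4m$. I would set up the induction so that each iteration of the recursion is responsible for an additive overhead of at most $4$, yielding $4m$ after all $m$ iterations.

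The main work is the inductive step, which splits into the two cases handled by the algorithm. In the $0$-hop case, the algorithm commits the path $\gamma_k$ to $\Pi$ at cost exactly $|\gamma_k|$, and recurses on $S',T',\W'$. Here I would argue that the restriction of $\Gamma$ to the remaining indices is a valid path set for the subproblem (the paths $\gamma_j$ for $j\neq k$ lie in $\F'=\F\setminus\B_2(t_k)$, as shown in the proof of Theorem~\ref{thm:complete}), so the optimal-assignment cost for the subproblem is at most $\sum_{j\neq k}|\gamma_j|=|\Gamma|-|\gamma_k|$. The overhead contributed by this step is therefore $0$, comfortably within the budget of $4$. In the $1$-hop case, the algorithm commits $H_i^k$ to $\Pi$ and recurses after removing $s_i$ and $t_k$. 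The crucial accounting tool is Corollary~\ref{cor:cost_1_hop}, which gives $|H_i^k|+|\tilde{H}_k^i|=|\gamma_i|+|\gamma_k|+4$. Since $\tilde{H}_k^i$ together with the untouched paths $\gamma_j$ ($j\neq i,k$) forms a valid path set in $\F'$ for the subproblem, the optimal-assignment cost of the subproblem is bounded by $|\tilde{H}_k^i|+\sum_{j\neq i,k}|\gamma_j|=|\Gamma|-|\gamma_i|-|\gamma_k|+|\tilde{H}_k^i|$. Combining with the corollary, the committed path $H_i^k$ plus the new lower bound exceeds $|\Gamma|$ by exactly $4$, so again the per-step overhead is at most $4$.

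Formally, I would state the induction hypothesis as: for any valid instance with $m'$ robots whose optimal-assignment path set has cost $|\Gamma'|$, the algorithm returns $\Pi'$ with $|\Pi'|\leq |\Gamma'|+4m'$. The base case $m'=0$ (or $m'=1$) is immediate since no interaction overhead arises. For the step, I would write $|\Pi|$ as the committed path length plus the cost of the recursively returned solution, apply the induction hypothesis to the $(m-1)$-robot subproblem, and bound the subproblem's optimal-assignment cost $|\Gamma'|$ by the explicit valid path sets described above. In both cases the algebra yields $|\Pi|\leq |\Gamma|+4(m-1)+4=|\Gamma|+4m\leq\text{\opt}+4m$.

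The subtle point, and the step I expect to require the most care, is verifying that the recursion's optimal-assignment cost $|\Gamma'|$ is genuinely bounded by the cost of the specific path set I exhibit. This requires two things: that the exhibited paths actually lie in the reduced free space $\F'$ (handled for $\gamma_j$, $j\neq i,k$, and for $\tilde{H}_k^i$ by Lemma~\ref{lem:switch}, which guarantees $\B_1(\tilde{H}_k^i)\cap\B_1(t_k)=\emptyset$, hence $\tilde{H}_k^i\subset\F'$), and that these paths correctly connect the reduced start set $S'$ to the reduced goal set $T'$ as an assignment. In the $1$-hop case one must check the assignment bookkeeping: $\tilde{H}_k^i$ runs from $s_k$ to $t_i$, covering the start $s_k$ and the goal $t_i$, while the paths $\gamma_j$ ($j\neq i,k$) cover the remaining starts and goals, so that $S'=S\setminus\{s_i\}$ and $T'=T\setminus\{t_k\}$ are exactly matched. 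Once this is confirmed, the inequality $|\Gamma'|\leq$ (cost of exhibited set) follows from optimality of $\Gamma'$, and the remainder of the argument is the direct arithmetic sketched above.
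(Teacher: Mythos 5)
Your proposal is correct and follows essentially the same route as the paper's proof: both exhibit the path set $(\Gamma\setminus\{\gamma_i,\gamma_k\})\cup\{\tilde{H}_k^i\}$ as a feasible assignment for the subproblem to bound $|\Gamma'|$, invoke Corollary~\ref{cor:cost_1_hop} to charge an additive $4$ per iteration, and iterate over the $m$ recursion levels. Your version merely makes the induction explicit and spells out the bookkeeping (that $\tilde{H}_k^i$ covers $s_k$ and $t_i$ so that $S',T'$ are exactly matched), which the paper leaves implicit.
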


\begin{proof}
  Let $\Gamma:=\{\gamma_1,\ldots,\gamma_m\}$ be the optimal-assignment
  path set for the input $S,T,\W$, and let $t_k$ be a
  standalone. Additionally, assume that the algorithm generated the
  $1$-hop path $H_i^k$, since $\gamma_k$ was blocked (the case when
  $\gamma_k$ is not blocked is simpler to analyze). Let
  $\tilde{H}_k^i$ be the switch path from $s_k$ to $t_i$.  Similarly,
  denote by $\Gamma'$ the optimal-assignment path set for
  $S':=S\setminus \{s_i\},T':=T\setminus \{t_k\}$ and the workspace
  $\W':=\W\setminus \B_1(t_k)$, which induces the free space $\F'$.

  We will show that $|\Gamma'|+|H_i^k|\leq \text{\opt}+4$. As mentioned
  in Theorem~\ref{thm:complete}, for every $j\neq i,k$, it follows
  that for $\gamma_j\in \Gamma$ and $\gamma_j\subset \F'$. We also
  showed that the same holds for $\tilde{H}_k^i$, i.e.,
  $\tilde{H}_k^i\subset \F'$. Thus, the path set
  $R:=(\Gamma\setminus\{\gamma_i,\gamma_k\}) \cup \{\tilde{H}_k^i\}$
  represents a valid assignment for $S',T',\F'$, even though it might
  not be optimal. This means that $|\Gamma'|\leq |R|$. Thus,
  \begin{align*}
    |\Gamma'|+|H_i^k|& \leq  |R|+|H_i^k|  =  \sum_{j\neq
                       i,k}^m|\gamma_j|+|\tilde{H}_k^i|+|H_i^k|\\ & \leq 
                                                            \sum_{k=1}^m|\gamma_k|+4
                                                            = |\Gamma|+4
                                                            \leq \text{\opt}+4,
  \end{align*}
  \noindent where the third step is due to
  Corollary~\ref{cor:cost_1_hop}. Thus, every level of the recursion
  introduces and additive error factor of $4$. Repeating this process
  for the $m$ iterations we obtain $|\Pi|\leq \text{\opt}+4m$.
\end{proof}

\subsection{Complexity Analysis}
We analyze the complexity of the algorithm. In order to do so, we have
to carefully consider the operations that are performed in every
iteration of the algorithm.

\begin{thm}\label{thm:complexity}
  Given $m$ unlabeled unit-disc robots operating in a polygonal
  workspace with $n$ vertices, the algorithm described above returns a
  solution, or reports that none exists otherwise, with a running time
  of \complexity.
\end{thm}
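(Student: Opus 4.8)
The plan is to prove the running-time bound of $\tilde{O}(m^4+m^2n^2)$ by accounting for the cost of a single iteration of the recursion and then multiplying by the $m$ levels of recursion. The dominant work in each iteration is the construction of the optimal-assignment path set $\Gamma$, so I would break this into two sub-tasks: (1)~computing all the pairwise shortest paths $\gamma_i^j$ from each $s_i\in S$ to each $t_j\in T$ in the free space $\F$, and (2)~selecting from these $m^2$ candidate paths the min-cost set $\Gamma$ realizing a valid assignment $S\to T$.

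For sub-task (1), I would appeal to a shortest-path map / visibility-type structure for a single unit-disc robot amid polygonal obstacles of total complexity~$n$. The standard approach is to compute shortest paths in the free space $\F$; building the relevant combinatorial structure (e.g.\ the shortest-path map from a fixed source, or working in an arrangement of size $\tilde{O}(n^2)$) lets one extract shortest-path \emph{lengths} from a fixed start to all goals. Since there are $m$ start positions and for each we must reach all $m$ goals, this yields a cost on the order of $\tilde{O}(mn^2 + m^2 \cdot(\text{query/extraction}))$ per iteration for the geometric part, and I would argue the $\tilde{O}(n^2)$ factor comes from the complexity of the arrangement/free space one has to preprocess, while the $m^2$ entries of the cost matrix are filled by queries against it. For sub-task (2), once the $m\times m$ matrix of pairwise shortest-path costs is in hand, finding the minimum-cost perfect matching between $S$ and $T$ is a classical assignment problem solvable by the Hungarian algorithm in $\tilde{O}(m^3)$ time. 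After $\Gamma$ is fixed, identifying a standalone goal, testing whether $\gamma_k$ is $0$-hop, and constructing the $1$-hop and switch paths are all cheap: they reduce to testing interference of $O(m)$ disc regions against $O(m)$ paths, each path of bounded combinatorial complexity, costing at most $\tilde{O}(m^2 \cdot (\text{path complexity}))$, which is absorbed by the dominant terms.

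Summing the per-iteration cost gives roughly $\tilde{O}(mn^2 + m^3)$ for the geometry and matching, and since the recursion has exactly $m$ levels (each removing one start and one goal, per the description in Section~\ref{sec:algorithm}), the total is $\tilde{O}(m^2 n^2 + m^4)$, matching the claimed bound. I would be careful to note that the free-space structure can in principle be precomputed once, but since each recursion removes a disc $\B_1(t_k)$ and thereby modifies $\F$, the cleanest accounting is to rebuild (or incrementally update) the shortest-path information at each of the $m$ levels; treating each level as a fresh $\tilde{O}(mn^2+m^3)$ computation still yields the stated total and avoids delicate update arguments.

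The main obstacle I anticipate is pinning down the $\tilde{O}(n^2)$ factor for the single-robot shortest-path computation in the free space of a unit disc amid polygonal obstacles, and more precisely justifying that all $m^2$ shortest-path \emph{lengths} can be extracted within the claimed budget rather than recomputing a map from scratch for each of the $m$ sources. The free space $\F$ for a disc is bounded by line segments and circular arcs (the Minkowski-inflated obstacle boundary), so its complexity and the associated shortest-path map are of size $\tilde{O}(n^2)$ in the worst case; I would need to cite or sketch the appropriate geometric shortest-path machinery and confirm that building one shortest-path map per source costs $\tilde{O}(n^2)$, giving $\tilde{O}(mn^2)$ per level and $\tilde{O}(m^2 n^2)$ overall. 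The rest of the bound, namely the $\tilde{O}(m^4)$ from running the assignment algorithm at each of the $m$ levels, is routine. So the crux is entirely in the geometric shortest-path step; everything combinatorial (matching, standalone-goal detection, interference tests) slots in below the dominant terms.
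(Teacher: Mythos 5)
Your proposal is correct and follows essentially the same route as the paper: a visibility-graph/shortest-path-map computation of all $m^2$ pairwise shortest-path costs in $\tilde{O}(m(m^2+n^2))$ per iteration (the paper builds one visibility graph of $O(m+n)$ vertices per level and runs Dijkstra from each source), the Hungarian algorithm in $O(m^3)$, cheap standalone-goal and interference tests, and $m$ recursion levels, yielding $\tilde{O}(m^4+m^2n^2)$. The only minor imprecision is that the free space of a disc amid polygons has linear complexity $O(n+j)$ (the paper cites this), not $\tilde{O}(n^2)$; the quadratic factor comes from the visibility graph over it, which is what you in fact use, so the bound is unaffected.
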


\begin{proof}
  Let us consider a specific iteration $j$. The input of this
  iteration consists of $m-j+1$ start positions $S_j$ and $m-j+1$ goal
  positions $T_j$. The workspace region of this iteration is defined
  to be $\W_j:=\W\setminus \left(\B_1(T\setminus T_j)\right)$, which
  induces the free space~$\F_j$. Note that
  $S_1\equiv S, T_1\equiv T, \F_1\equiv \F, \Gamma_1\equiv \Gamma$.

  In order to find the optimal-assignment path set $\Gamma_j$ for
  $S_j,T_j,\F_j$ one has to first find the shortest path in $\F_j$ for
  every pair of start and goal positions $s\in S_j, t\in T_j$. Given
  the costs of all those combinations, the Hungarian
  algorithm~\cite{l-combi76}\footnote{Also known as the Kuhn-Munkres
    algorithm.} finds the optimal assignment, and so $\Gamma_j$ is
  produced. In the next step, a standalone goal $t_k$ is identified
  and it is checked whether the $0$-hop path leading to $t_k$ is
  clear, in which case $\gamma_k$ is added to $\Pi$. If it is not, one
  needs to find the last start position that interferes with this path
  and generate the respective $1$-hop path, which will be included in
  $\Pi$.

  The complexity of finding a shortest path for a disc depends on the
  complexity of the workspace, which in our case is $O(n+j)$. This
  task is equivalent to finding a shortest path for a point robot
  in~$\F_j$. The generation of $\F_j$ can be done in
  $O\left((n+j)\log^2(n+j)\right)$ time complexity,
  and the overall complexity of this structure would be
  $O(m+j)$~\cite{klps-jordan86}. A common approach for finding
  shortest paths in the plane is to construct a \emph{visibility
    graph}~\cite{cg_book} which encapsulates information between every
  pair of vertices of a given arrangement of segments, while avoiding
  crossings with the segments. In our case, the arrangement should
  include $\F_j$ as well as all the points from $S_j$ and $T_j$. Thus,
  we would need to generate a visibility graph over a generalized polygon of
  total complexity $O(m+n)$. This can be done in
  $\tilde{O}\left((m+n)^2\right)
  =\tilde{O}(m^2+n^2)$~\cite{cg_book}\footnote{We
    note that in our setting the arrangement consists not only of
    straight-line segments, but also of circular arcs which are
    induced by robots situated in target positions. Yet, the algorithm
    constructing the visibility graph can treat such cases as well, while
    still guaranteeing the (near-)quadratic run-time complexity mentioned
    above.}. Given the visibility graph with $O(m+n)$ vertices and
  $O(m^2+n^2)$ edges we find for each $s\in S_j$ the shortest path to
  every $t\in T_j$. For each $s$ we run the Dijkstra algorithm which
  requires $O(m^2+n^2)$ time, and finds the shortest path from the
  given $s$ to any $t\in T_j$. Since $|S_j|=m-j+1$, the total running
  time of finding the shortest path from every start to every goal is
  $O((m-j)(m^2+n^2))$.

  To find $\Gamma_j$ we employ the Hungarian
  algorithm~\cite{l-combi76}, which runs in $O\left((m-j)^3\right)$
  time.  Now, given $\Gamma_j$ we wish to find a standalone goal
  $t_k$. We first note that the complexity of each path in $\Gamma_j$
  is bounded by the complexity of $\F_j$, which is $O(n+j)$. For every
  $t_i\in T_j, \gamma_{i'}\in \Gamma_j, i\neq i'$ we check whether
  $\B_1(t_i)\cap \B_1(\gamma_{i'})\neq \emptyset$. This step has a
  running time of
  $\tilde{O}((m-j)(m-j)(n+j))=\tilde{O}((m-j)^2(n+j))$.  Finding the
  last closest blocking start from $S_j$ of the path $\gamma_k$ takes
  additional $O((m-j)(n+j))$ time by going over all starts in $S_j$
  and comparing the distance of their interference point with
  $\gamma_k$. Thus, the overall complexity of a given iteration $j$ is
  \begin{dmath*}
    \tilde{O}\left((m-j)(m^2+n^2)+(m-j)^3+(m-j)^2(n+j)+(m-j)^2\right)\\=
    \tilde{O}\left((m-j)(m^2+n^2)\right).
  \end{dmath*}
  Note that the cost of the different components is absorbed in the
  cost of calculating the shortest paths.  We conclude with the
  running time of the entire algorithm:
  \begin{dmath*} \tilde{O}\left(\sum_{j=1}^m (m-j)(m^2+n^2) \right)=
    \tilde{O}\left(m^4+m^2n^2 \right).
  \end{dmath*}
\end{proof}

\begin{figure*}[!]\centering
  \begin{subfigure}[b]{0.22\textwidth}
    \includegraphics[width=\textwidth]{./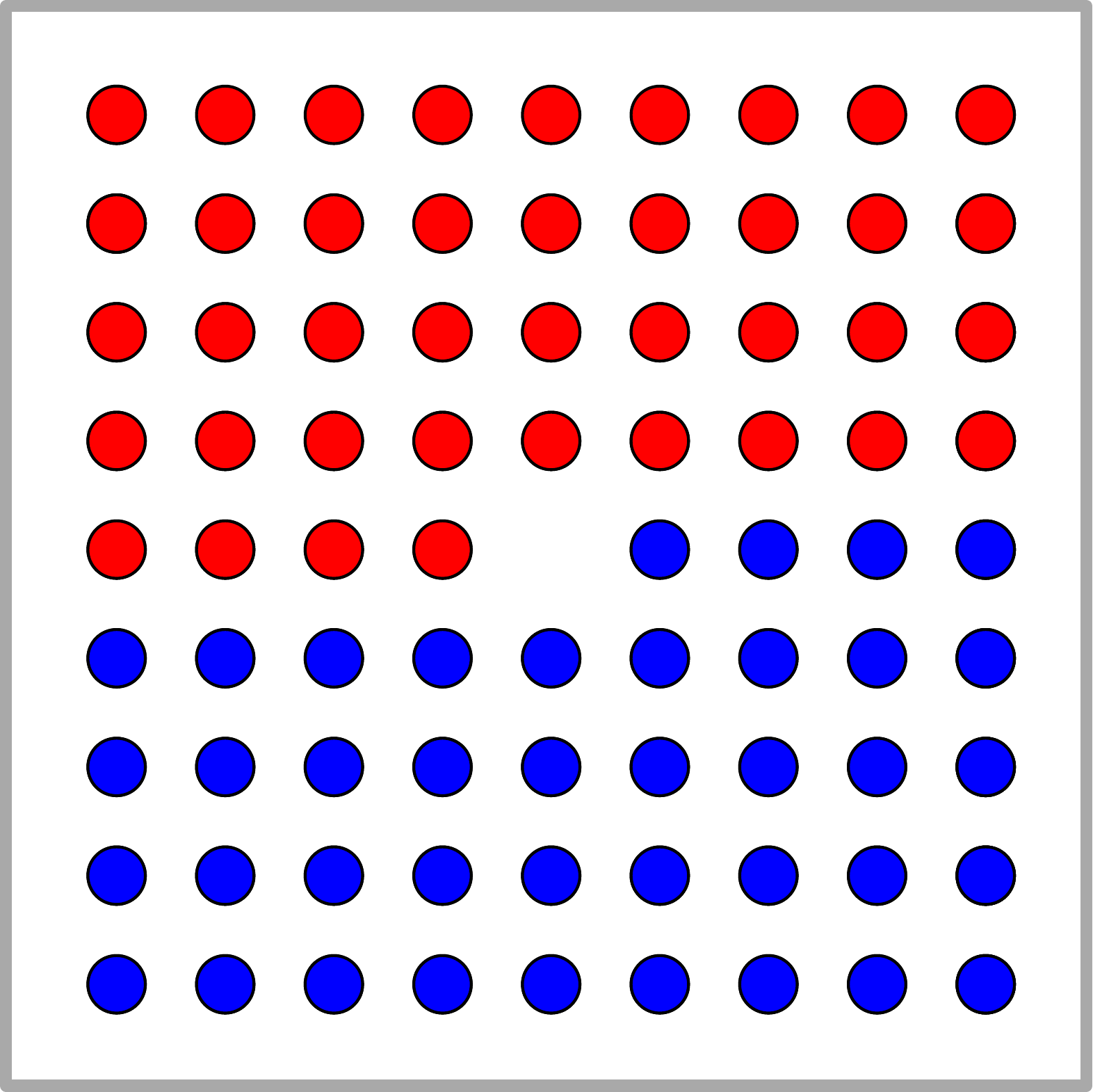}
    \caption{grid}
    \label{fig:scn_grid}
  \end{subfigure}\quad
  \begin{subfigure}[b]{0.22\textwidth}
    \includegraphics[width=\textwidth]{./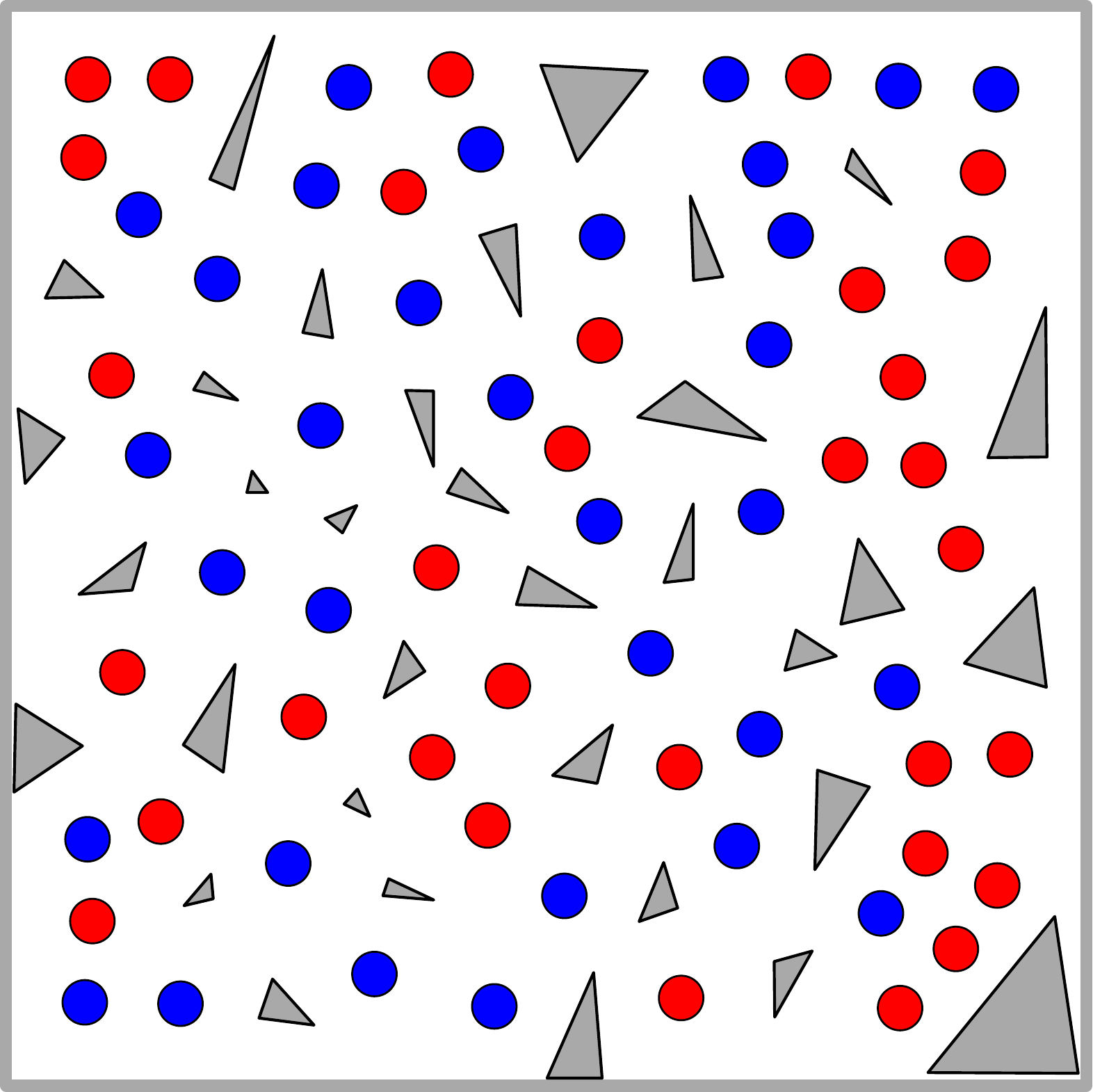}
    \caption{triangles}
    \label{fig:scn_triangles}
  \end{subfigure} \quad
  \begin{subfigure}[b]{0.22\textwidth}
    \includegraphics[width=\textwidth]{./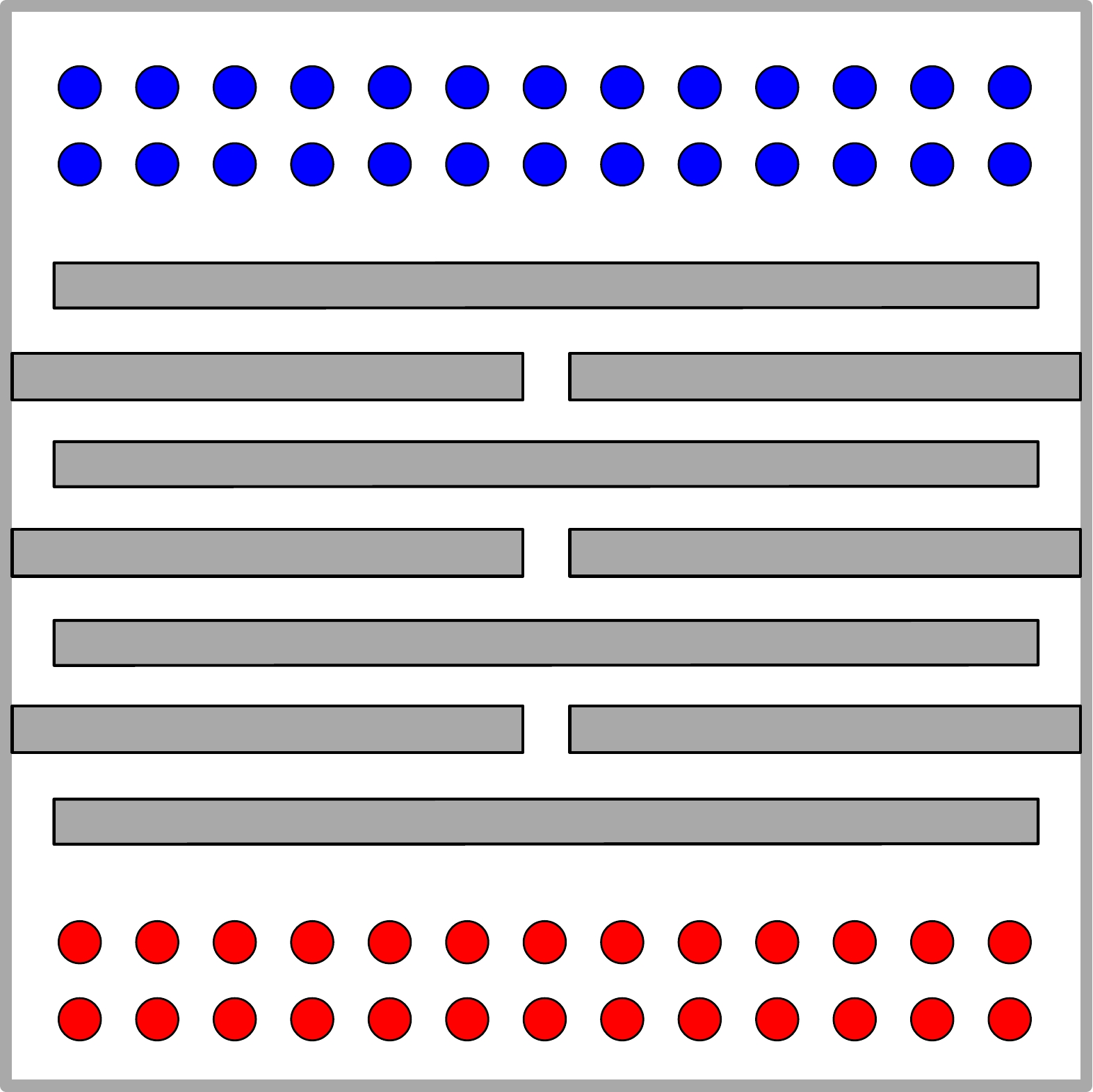}
    \caption{maze}
    \label{fig:scn_maze}
  \end{subfigure} \quad
  \begin{subfigure}[b]{0.22\textwidth}
    \includegraphics[width=\textwidth]{./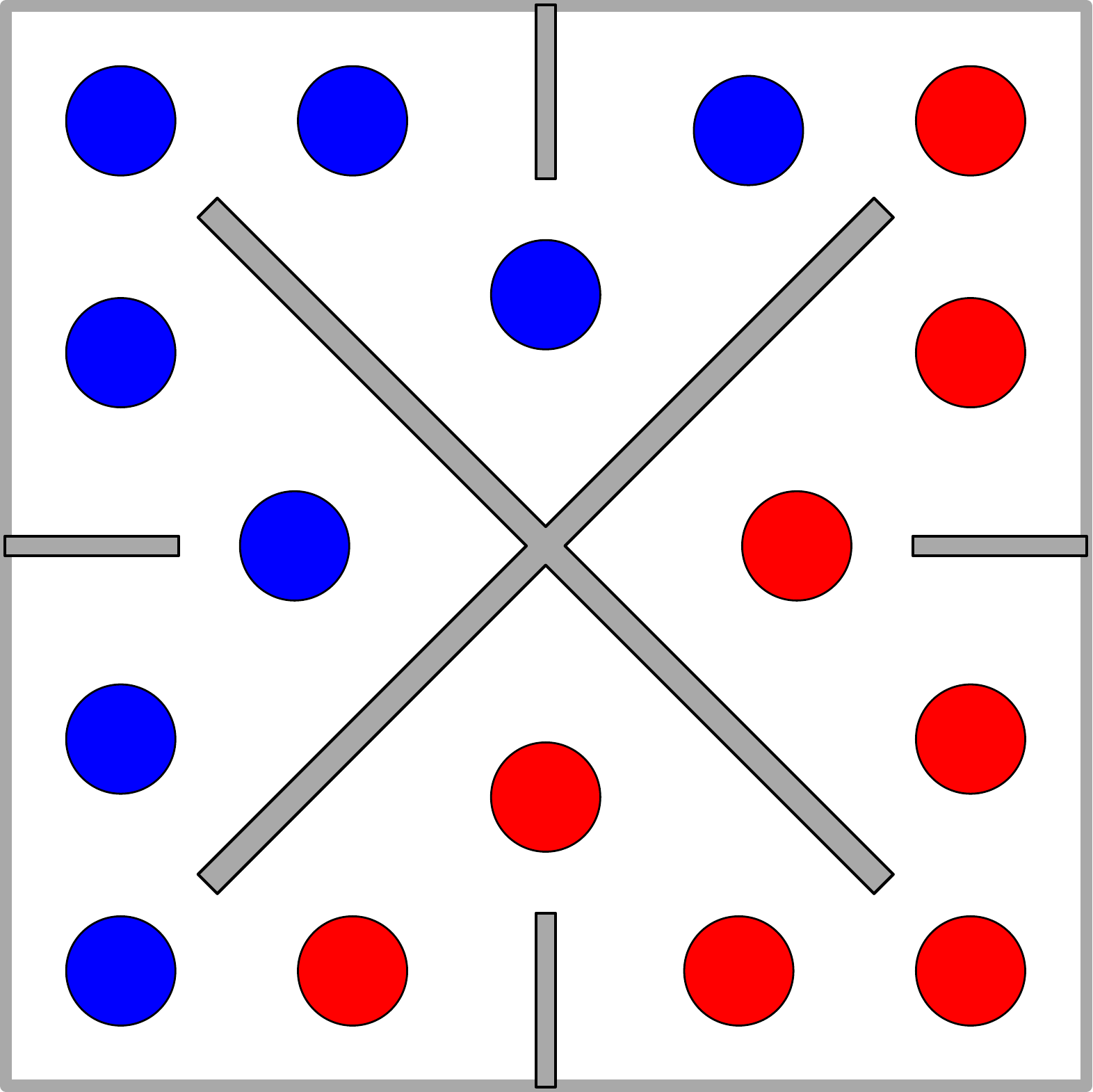}
    \caption{cross}
    \label{fig:scn_cross}
  \end{subfigure}
  \caption{Test scenarios. Obstacles are represented as gray
    polygons. Discs represent robots placed in start (red) and goal
    (blue) positions. (a) Grid scenario with 40 robots. (b) Triangles
    scenario with 32 robots and multiple triangular obstacles. (c)
    Maze scenario with with 26 robots; robots need to pass through a
    collection of narrow passages in order to reach their goals. (d)
    Cross scenario with 8 robots and several wall obstacles.}
  \label{fig:scenarios}
\end{figure*}

\section{Experimental results}\label{sec:experiments}
We implemented the algorithm and evaluated its performance on various
challenging scenarios.

\subsection{Implementation details}
First, we wish to emphasize that our implementation deals with
geometric primitives, e.g., polygons and discs, and does
not rely on any graph discretization of the problem. The
implementation relies on exact geometric methods that are provided by
CGAL~\cite{cgal}. As such, it is complete, robust and
deterministic. In addition to that, the implementation is
parameter-free.

We implemented the algorithm in \Cpp and relied heavily on CGAL for
geometric computing, and in particular on the {{\tt Arrangement\_2}
  package~\cite{cgal_arr_book}. Generation of the free space was done
  using \emph{Minkowski sums}, while shortest paths were generated
  using \emph{visibility graphs}~\cite{cg_book}. In order to find the
  optimal assignment, we used a \Cpp implementation of the \emph{Hungarian
    algorithm}~\cite{l-combi76}, available at~\cite{munkres}.

\subsection{Test scenarios}
We report in Table~\ref{tbl:results} on the running time of the
algorithm for the four scenarios depicted in
Figure~\ref{fig:scenarios}. The grid scenario
(Figure~\ref{fig:scn_grid}) is used to illustrate the performance of
the algorithm in a sterile obstacles-free workspace. The triangles and
cross scenarios (Figure~\ref{fig:scn_triangles},\ref{fig:scn_cross})
include multiple obstacles, which have a tremendous affect on the
performance of the algorithm. The maze scenario
(Figure~\ref{fig:scn_maze}), which also includes multiple obstacles,
has several narrow passages.

It is evident that the running time is dominated by shortest-path
calculations (see Theorem~\ref{thm:complexity}). The second largest
contributor to the running time of the algorithm, is the calculation
and maintenance of the configuration space, which includes the update
of the free space for every iteration.  A interesting relation is
found between the overall complexity of finding the optimal assignment
($O(m^4)$) and its modest contribution to the actual running time in
practice, when compared to the other components.  This can be
explained by the fact that the implementation of the Hungarian
algorithm~\cite{munkres} employs floating-point arithmetic, while the
geometric operations, e.g., shortest pathfinding, rely on exact
geometric kernels, which have unlimited
precision~\cite{MetMehPioSchYap08}. It is noteworthy that the
algorithm produces solutions whose cost is very close to the optimal
cost.

To gather a better understanding of the running time of the algorithm,
we also report on the running time of each iteration for the triangles
scenario (see Figure~\ref{fig:graph_iterations}). For every iteration
$1\leq j\leq 32$ we report on the running time of maintaining the
configuration space, and finding shortest paths.

Another important aspect of the algorithm is the relation between the
number of robots and the performance. To test this relation we
performed the following experiment, which is based on the triangles scenario
as well. We report in Figure~\ref{fig:graph_robots} how the performance
is affected by an increase or decrease in the number of robots. In
order to maintain a similar level of density between the various tests
we increase the radius of the robots to the maximal allowed radius
which abides by the separation assumptions.

The algorithm and its current implementation can deal with rather
complex scenarios. However, it is clear that there is a limit to the
number of robots or workspace complexity with which it can cope, due
the relatively high degree of the polynomial, in the running time
complexity.

\begin{figure*} \centering
  \begin{subfigure}[b]{0.43\textwidth}
    \includegraphics[width=\textwidth,clip=true, trim=10pt 10pt 40pt 19pt]{./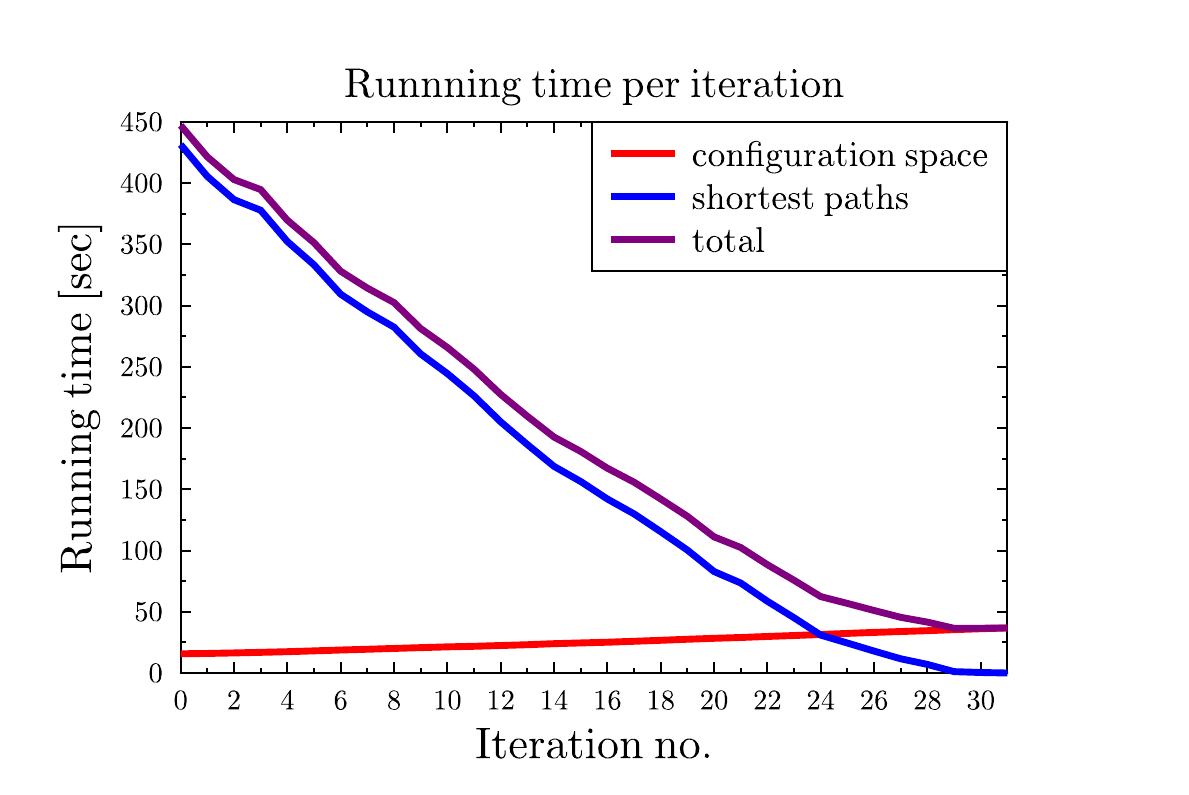}
    \caption{ }
    \label{fig:graph_iterations}
  \end{subfigure}\qquad 
  \begin{subfigure}[b]{0.43\textwidth}
    \includegraphics[width=\textwidth,clip=true, trim=10pt 10pt 40pt 19pt]{./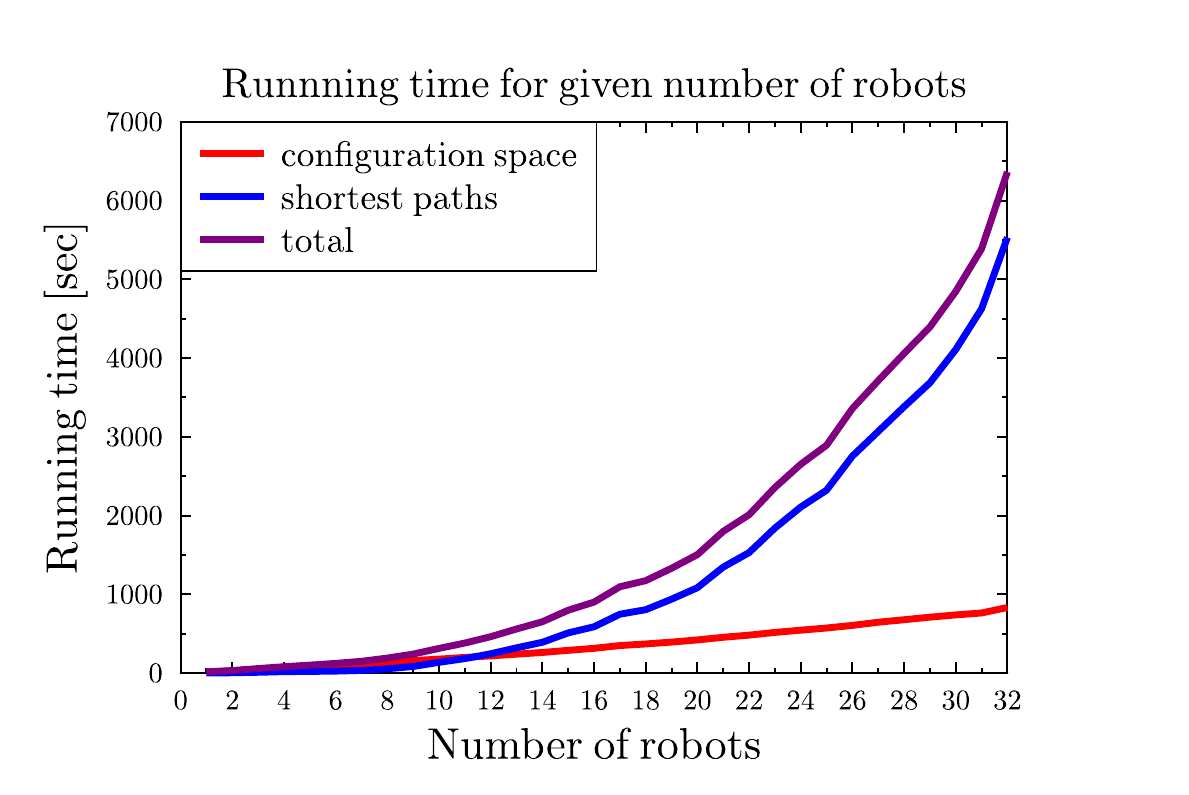}
    \caption{ }
    \label{fig:graph_robots}
  \end{subfigure}
  \caption{Graphs depicting the running time of maintaining the
    configuration space, and finding shortest path, for the triangles
    scenario (Figure~\ref{fig:scn_triangles}). (a) Running time is
    reported separately for every iteration of the algorithm. (b) The
    behavior of the algorithm for a growing number of robots in the
    triangles scenario is depicted.}\label{fig:graphs}
\end{figure*}

\begin{table*}[t]
\begin{center}
	\begin{tabular}{cc|c|c|c|c|}
      \cline{3-6}
      &  & \multicolumn{4}{c|}{Scenarios} \\ 
      \cline{3-6}
      &  & Grid  & Triangles & Maze & Cross \\ 
      \hhline{======}
      \multicolumn{1}{|c|}{\multirow{3}{*}{scenario properties} } & robots & 40 & 32 & 26 & 8 \\ 
      \cline{2-6}
      \multicolumn{1}{|c|}{} & workspace vertices & 4 & 118 & 56 & 44
      \\ 
      \cline{2-6}
      \multicolumn{1}{|c|}{} & robot radius & 0.05 & 0.034 & 0.035 & 0.09 \\
      \hhline{======} 
      \multicolumn{1}{|c|}{\multirow{5}{*}{running time (sec.)} } &
                                                                    configuration
                                                                    space
         &  261 & 832 & 141 & 17 \\ 
      \cline{2-6} 
      \multicolumn{1}{|c|}{} & shortest paths &  50  & 5532  & 658 & 1 \\ 
      \cline{2-6} 
      \multicolumn{1}{|c|}{} & optimal assignment  & 0.016 & 0.001 & 0.016 & 0 \\ 
      \cline{2-6} 
      \multicolumn{1}{|c|}{} & standalone goal  & 0.001 & 0.016 &
                                                                  0.119 & 0 \\ 
      \cline{2-6} 
      \multicolumn{1}{|c|}{} & \textbf{total} & \textbf{311} & \textbf{6394} & \textbf{800} & \textbf{19} \\ 
      \hhline{======}
      \multicolumn{1}{|c|}{\multirow{2}{*}{cost} } & lower bound &
                                                                   36.33
      & 10.41 & 154.86 & 12.11  \\ 
      \cline{2-6} 
      \multicolumn{1}{|c|}{} & actual cost  & 37.16 & 10.69 & 155.21 & 12.4 \\ 
      \hhline{======}
      \multicolumn{1}{|c|}{\multirow{2}{*}{algorithm's behavior} } &
                                                                     $0$-hop paths  & 25 & 32 & 14 & 7 \\ 
      \cline{2-6} 
      \multicolumn{1}{|c|}{} & $1$-hop paths  & 1 & 0 & 12 & 1 \\ 
      \hline 
\end{tabular} 
\end{center}
\caption{Results of our algorithm for the scenarios depicted in
  Figure~\ref{fig:scenarios}. We first describe the properties of each
  scenario, which consist of the number of robots, the complexity of
  the workspace ($n$), and the robot's radius (every
  scenario is bounded by the $[-1,1]^2$ square). Then, we report the running time
  (in seconds)
  of the different components of the algorithm: construction and
  maintenance of the configuration space; calculation of shortest
  paths; calculation of the optimal assignment; search for a
  standalone goal. Then, we report the lower-bound cost of the
  solution and the cost of the actual solution returned by the algorithm. Finally, we report the number of iterations for which
  the algorithm returned a $0$-hop or a $1$-hop path.}
\label{tbl:results}
\end{table*}

\section{Discussion and future work}\label{sec:discussion}
\subsection{Relaxing the separation conditions}
The algorithm presented in this paper requires that two conditions
will hold. First it is assumed that every pair of start or goal
position will be separated by a distance of at least $4$. The second
condition requires that every start or goal positions will be
separated from an obstacle by a distance of at least
$\sqrt{5}\approx 2.236$.  We believe that the obstacles-separation
factor can be lowered to $\sqrt{13-6\sqrt{3}}\approx 1.614$ using a
tighter mathematical analysis. While the proof for
Lemma~\ref{lem:shortcuts} would change, the rest of the proofs will
not require any special modification.  In the near future we aim to
consider less strict separation requirements such as reducing the
robots separation to $3$ and removing the requirement that start
positions will be separated from goal positions. However, it seems
that the machinery described here will not suffice for the tighter
setting and different tools will be required.

We note that without adequate obstacle clearance, it is in fact
impossible for any algorithm to guarantee a maximum $O(m)$ deviation
from the cost of an optimal-assignment path set, regardless of the
separation between start and goals. For instance, see the gadget in
Figure~\ref{figure:obstacle-clearance}. The two pairs of (blue)
start and (red) goal positions satisfy the $4$ separation
requirement. The disc starting at the bottom left, which we call disc
$1$, is matched by the optimal assignment with the goal on the top
left of the figure, and the disc starting in the middle, which we call
disc $2$, is matched with the goal on the right. This is due to the
fact that disc $1$ and disc $2$ reside in two disjoint components of
the free space. For disc $1$ to go through the lower tunnel, disc 2
must move non-trivially ( e.g., to the location of the green
discs). Let this nontrivial distance be $\delta$. Then, as disc $1$
passes through the upper tunnel, another nontrivial move of distance
$\delta$ must be performed by disc $2$ to clear the way. Now, we
remove disc $1$ from the gadget and stack $m/2$ of the resulting
gadget vertically. If we require the rest $m/2$ discs to
pass through the stacked construction, a total distance penalty of
$\Omega(m^2\delta)$ is incurred.

\begin{figure}[htp]
  \begin{center}
    \includegraphics[width=3in]{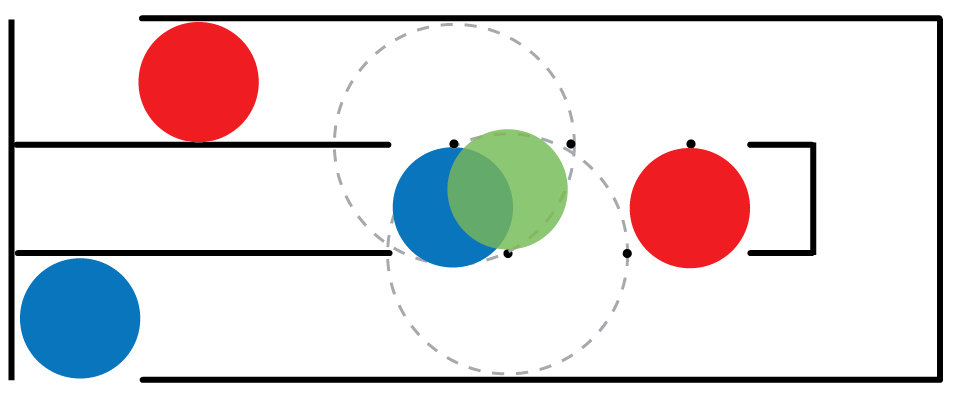}
  \end{center}
  \caption{Illustration of a scenario in which every solution has a
    deviation of $\Omega(m^2)$ from the cost of the optimal-assignment
    path set.}\label{figure:obstacle-clearance}
\end{figure}

\subsection{Improving the algorithm}
It is evident from our experiments that most of the running time of
the algorithm in practice is devoted to finding shortest paths. To be
specific, in each iteration the algorithm performs a shortest-path
search between every pair of start and target positions. This entire
process is performed only for the sake of finding a single standalone
goal. It will be interesting to investigate whether more information
can be extracted from the optimal assignment returned by the Hungarian
algorithm. For instance, could there be two or more standalone goals
to which there are separate non-interfering paths?

\subsection{Implementation in 3D}
For simplicity of presentation, our algorithm is discussed for the
planar setting. Yet, its completeness and near-optimality guarantees
hold also for the three-dimensional case with ball robots under the
two separation assumptions used in this paper. A main requirement for
a correct implementation of the algorithm is the ability to produce
shortest paths for a single robot moving amid static obstacles. While
this task does not pose particular challenges in the planar case, it
becomes extremely demanding in 3D. In particular, finding a shortest
path for a polyhedral robot translating amid polyhedral obstacles is
known to be NP-Hard~\cite{CanRei87}. It would be interesting to
investigate whether our algorithm can be extended to work with
approximate shortest paths.

\bibliographystyle{IEEEtran}
\bibliography{bibilography.bib}

\begin{thebibliography}{10}
\providecommand{\url}[1]{#1}
\csname url@samestyle\endcsname
\providecommand{\newblock}{\relax}
\providecommand{\bibinfo}[2]{#2}
\providecommand{\BIBentrySTDinterwordspacing}{\spaceskip=0pt\relax}
\providecommand{\BIBentryALTinterwordstretchfactor}{4}
\providecommand{\BIBentryALTinterwordspacing}{\spaceskip=\fontdimen2\font plus
\BIBentryALTinterwordstretchfactor\fontdimen3\font minus
  \fontdimen4\font\relax}
\providecommand{\BIBforeignlanguage}[2]{{%
\expandafter\ifx\csname l@#1\endcsname\relax
\typeout{** WARNING: IEEEtran.bst: No hyphenation pattern has been}%
\typeout{** loaded for the language `#1'. Using the pattern for}%
\typeout{** the default language instead.}%
\else
\language=\csname l@#1\endcsname
\fi
#2}}
\providecommand{\BIBdecl}{\relax}
\BIBdecl

\bibitem{calinescu2008reconfigurations}
G.~Calinescu, A.~Dumitrescu, and J.~Pach, ``Reconfigurations in graphs and
  grids,'' \emph{SIAM Journal on Discrete Mathematics}, vol.~22, no.~1, pp.
  124--138, 2008.

\bibitem{YuLav12WAFR}
J.~Yu and S.~M. LaValle, ``Multi-agent path planning and network flow,'' in
  \emph{Workshop on the Algorithmic Foundations of Robotics ({WAFR}), MIT,
  Cambridge, Massachusetts, USA}, 2012, pp. 157--173.

\bibitem{TurMicKum13ICRA}
M.~Turpin, N.~Michael, and V.~Kumar, ``Concurrent assignment and planning of
  trajectories for large teams of interchangeable robots,'' in
  \emph{International Conference on Robotics and Automation ({ICRA})}, 2013.

\bibitem{YuLav12CDC}
J.~Yu and S.~M. LaValle, ``Distance optimal formation control on graphs with a
  tight convergence time guarantee,'' in \emph{Proceedings IEEE Conference on
  Decision \& Control}, 2012, pp. 4023--4028.

\bibitem{SolHal15}
K.~Solovey and D.~Halperin, ``On the hardness of unlabeled multi-robot motion
  planning,'' in \emph{Robotics: Science and Systems (RSS)}, 2015, these
  proceedings.

\bibitem{abhs-unlabeled14}
A.~Adler, M.~de~Berg, D.~Halperin, and K.~Solovey, ``Efficient multi-robot
  motion planning for unlabeled discs in simple polygons,'' in \emph{Workshop
  on the Algorithmic Foundations of Robotics ({WAFR})}, 2014.

\bibitem{TurMicKum12}
M.~Turpin, N.~Michael, and V.~Kumar, ``Trajectory planning and assignment in
  multirobot systems,'' in \emph{Workshop on the Algorithmic Foundations of
  Robotics ({WAFR}), MIT, Cambridge, Massachusetts, USA}, 2012, pp. 175--190.

\bibitem{KarGerSta12}
I.~Karamouzas, R.~Geraerts, and A.~F. van~der Stappen, ``Space-time group
  motion planning,'' in \emph{Workshop on the Algorithmic Foundations of
  Robotics (WAFR), MIT, Cambridge, Massachusetts, USA}, 2012, pp. 227--243.

\bibitem{klb-ftpp13}
A.~Krontiris, R.~Luna, and K.~E. Bekris, ``From feasibility tests to path
  planners for multi-agent pathfinding,'' in \emph{Symposium on Combinatorial
  Search, ({SOCS}), Leavenworth, Washington, USA}, 2013.

\bibitem{TurMohMicKum13}
M.~Turpin, K.~Mohta, N.~Michael, and V.~Kumar, ``Goal assignment and trajectory
  planning for large teams of aerial robots,'' in \emph{Robotics: Science and
  Systems}, 2013.

\bibitem{YuLav13ICRA-A}
J.~Yu and S.~M. LaValle, ``Planning optimal paths for multiple robots on
  graphs,'' in \emph{International Conference on Robotics and Automation
  ({ICRA})}, 2013, pp. 3612--3617.

\bibitem{WurDanMou08}
P.~R. Wurman, R.~D'Andrea, and M.~Mountz, ``Coordinating hundreds of
  cooperative, autonomous vehicles in warehouses,'' \emph{AI Magazine},
  vol.~29, no.~1, pp. 9--19, 2008.

\bibitem{hss-cmpmio}
J.~E. Hopcroft, J.~T. Schwartz, and M.~Sharir, ``On the complexity of motion
  planning for multiple independent objects; {PSPACE}-hardness of the
  ``{W}arehouseman's problem'','' \emph{International Journal of Robotics
  Research}, vol.~3, no.~4, pp. 76--88, 1984.

\bibitem{sy-snp84}
P.~G. Spirakis and C.-K. Yap, ``Strong {NP}-hardness of moving many discs,''
  \emph{Information Processing Letters}, vol.~19, no.~1, pp. 55--59, 1984.

\bibitem{schwartz1983piano}
J.~T. Schwartz and M.~Sharir, ``On the piano movers' problem: {III}.
  coordinating the motion of several independent bodies: the special case of
  circular bodies moving amidst polygonal barriers,'' \emph{International
  Journal of Robotics Research}, vol.~2, no.~3, pp. 46--75, 1983.

\bibitem{ErdLoz86}
M.~A. Erdmann and T.~Lozano-P\'erez, ``On multiple moving objects,'' in
  \emph{International Conference on Robotics and Automation ({ICRA})}, 1986,
  pp. 1419--1424.

\bibitem{LavHut98b}
S.~M. LaValle and S.~A. Hutchinson, ``Optimal motion planning for multiple
  robots having independent goals,'' \emph{IEEE Transactions on Robotics \&
  Automation}, vol.~14, no.~6, pp. 912--925, 1998.

\bibitem{PenAke02}
J.~Peng and S.~Akella, ``Coordinating multiple robots with kinodynamic
  constraints along specified paths,'' in \emph{Algorithmic Foundations of
  Robotics {V}}, J.-D. Boissonat, J.~Burdick, K.~Goldberg, and S.~Hutchinson,
  Eds.\hskip 1em plus 0.5em minus 0.4em\relax Berlin: Springer-Verlag, 2002,
  pp. 221--237.

\bibitem{BerOve05}
J.~van~den Berg and M.~H. Overmars, ``Prioritized motion planning for multiple
  robots,'' in \emph{International Conference on Intelligent Robots and Systems
  (IROS)}, 2005, pp. 430 -- 435.

\bibitem{GhrOkaLav05}
R.~Ghrist, J.~M. O'Kane, and S.~M. LaValle, ``{Computing Pareto Optimal
  Coordinations on Roadmaps},'' \emph{International Journal of Robotics
  Research}, vol.~24, no.~11, pp. 997--1010, 2005.

\bibitem{BerSnoLinMan09}
J.~van~den Berg, J.~Snoeyink, M.~Lin, and D.~Manocha, ``Centralized path
  planning for multiple robots: Optimal decoupling into sequential plans,'' in
  \emph{Robotics: Science and Systems}, 2009.

\bibitem{OdoLoz89}
P.~A. O'Donnell and T.~Lozano-P\'{e}rez, ``Deadlock-free and collision-free
  coordination of two robot manipulators,'' in \emph{International Conference
  on Robotics and Automation ({ICRA})}, 1989, pp. 484--489.

\bibitem{SveOve98}
P.~\v{S}vestka and M.~H. Overmars, ``Coordinated path planning for multiple
  robots,'' \emph{Robotics and Autonomous Systems}, vol.~23, no.~3, pp.
  125--152, 1998.

\bibitem{avbsv-mpfmr}
B.~Aronov, M.~de~Berg, A.~F. van~der Stappen, P.~\v{S}vestka, and J.~Vleugels,
  ``Motion planning for multiple robots,'' \emph{Discrete {\&} Computational
  Geometry}, vol.~22, no.~4, pp. 505--525, 1999.

\bibitem{KloHut06}
S.~Kloder and S.~Hutchinson, ``Path planning for permutation-invariant
  multirobot formations,'' \emph{IEEE Transactions on Robotics}, vol.~22,
  no.~4, pp. 650--665, 2006.

\bibitem{shh12}
O.~Salzman, M.~Hemmer, and D.~Halperin, ``On the power of manifold samples in
  exploring configuration spaces and the dimensionality of narrow passages,''
  in \emph{Workshop on the Algorithmic Foundations of Robotics (WAFR)}, 2012,
  pp. 313--329.

\bibitem{ssh-fne13}
K.~Solovey, O.~Salzman, and D.~Halperin, ``Finding a needle in an exponential
  haystack: Discrete {RRT} for exploration of implicit roadmaps in multi-robot
  motion planning,'' in \emph{Workshop on the Algorithmic Foundations of
  Robotics (WAFR)}, 2014.

\bibitem{WagCho15}
G.~Wagner and H.~Choset, ``Subdimensional expansion for multirobot path
  planning,'' \emph{Artif. Intell.}, vol. 219, pp. 1--24, 2015.

\bibitem{GriAke05}
E.~J. Griffith and S.~Akella, ``Coordinating multiple droplets in planar array
  digital microfluidic systems,'' \emph{International Journal of Robotics
  Research}, vol.~24, no.~11, pp. 933--949, 2005.

\bibitem{kms-cpmg}
D.~Kornhauser, G.~Miller, and P.~Spirakis, ``Coordinating pebble motion on
  graphs, the diameter of permutation groups, and applications,'' in
  \emph{Foundations of Computer Science (FOCS)}.\hskip 1em plus 0.5em minus
  0.4em\relax IEEE Computer Society, 1984, pp. 241--250.

\bibitem{ampp-ltafpm}
V.~Auletta, A.~Monti, M.~Parente, and P.~Persiano, ``A linear time algorithm
  for the feasibility of pebble motion on trees,'' in \emph{Scandinavian
  Symposium and Workshops on Algorithm Theory (SWAT)}, 1996, pp. 259--270.

\bibitem{gh-mcpm}
G.~Goraly and R.~Hassin, ``Multi-color pebble motion on graphs,''
  \emph{Algorithmica}, vol.~58, no.~3, pp. 610--636, 2010.

\bibitem{Yu13}
J.~Yu, ``A linear time algorithm for the feasibility of pebble motion on
  graphs,'' \emph{CoRR}, vol. abs/1301.2342, 2013.

\bibitem{k-cpmg}
D.~Kornhauser, ``Coordinating pebble motion on graphs, the diameter of
  permutation groups, and applications,'' M.{Sc}. Thesis, Department of
  Electrical Engineering and Computer Scienec, Massachusetts Institute of
  Technology, 1984.

\bibitem{LunBek11}
R.~Luna and K.~E. Bekris, ``An efficient and complete approach for cooperative
  path-finding,'' in \emph{Conference on Artificial Intelligence, San
  Francisco, California, USA}, 2011.

\bibitem{KatYuLav13}
M.~Katsev, J.~Yu, and S.~M. LaValle, ``Efficient formation path planning on
  large graphs,'' in \emph{2013 {IEEE} International Conference on Robotics and
  Automation, Karlsruhe, Germany}, 2013, pp. 3606--3611.

\bibitem{sh-kcolor14}
K.~Solovey and D.~Halperin, ``{\it k}-{C}olor multi-robot motion planning,''
  \emph{International Journal of Robotic Research}, vol.~33, no.~1, pp. 82--97,
  2014.

\bibitem{l-combi76}
E.~L. Lawler, \emph{Combinatorial optimization: networks and matroids}.\hskip
  1em plus 0.5em minus 0.4em\relax Courier Dover Publications, 1976.

\bibitem{klps-jordan86}
K.~Kedem, R.~Livne, J.~Pach, and M.~Sharir, ``On the union of jordan regions
  and collision-free translational motion amidst polygonal obstacles,''
  \emph{Discrete {\&} Computational Geometry}, vol.~1, pp. 59--70, 1986.

\bibitem{cg_book}
M.~de~Berg, O.~Cheong, M.~van Kreveld, and M.~Overmars, \emph{Computational
  Geometry: Algorithms and Applications}, 3rd~ed.\hskip 1em plus 0.5em minus
  0.4em\relax Springer-Verlag, 2008.

\bibitem{cgal}
``\textsc{Cgal}, {C}omputational {G}eometry {A}lgorithms {L}ibrary,''
  \url{http://www.cgal.org}.

\bibitem{cgal_arr_book}
E.~Fogel, D.~Halperin, and R.~Wein, \emph{{CGAL} Arrangements and Their
  Applications - {A} Step-by-Step Guide}, ser. Geometry and computing.\hskip
  1em plus 0.5em minus 0.4em\relax Springer, 2012, vol.~7.

\bibitem{munkres}
``Implementation of the {Kuhn-Munkres} algorithm,''
  \url{https://github.com/saebyn/munkres-cpp}.

\bibitem{MetMehPioSchYap08}
L.~Kettner, K.~Mehlhorn, S.~Pion, S.~Schirra, and C.~Yap, ``Classroom examples
  of robustness problems in geometric computations,'' \emph{Comput. Geom.},
  vol.~40, no.~1, pp. 61--78, 2008.

\bibitem{CanRei87}
J.~F. Canny and J.~H. Reif, ``New lower bound techniques for robot motion
  planning problems,'' in \emph{28th Annual Symposium on Foundations of
  Computer Science, Los Angeles, California, USA}, 1987, pp. 49--60.

\end{thebibliography}

\end{document}